\let\citep\cite
   \newcommand\SkipToFmtEnd{}%
   \newcommand\EndFmtInput{}%
   \long\def\SkipToFmtEnd#1\EndFmtInput{}%
\newcommand\ReadOnlyOnce[1]{\@ifundefined{#1}{\@namedef{#1}{}}\SkipToFmtEnd}
\DeclareFontFamily{OT1}{cmtex}{}
\DeclareFontShape{OT1}{cmtex}{m}{n}
  {<5><6><7><8>cmtex8
   <9>cmtex9
   <10><10.95><12><14.4><17.28><20.74><24.88>cmtex10}{}
\DeclareFontShape{OT1}{cmtex}{m}{it}
  {<-> ssub * cmtt/m/it}{}
\DeclareFontShape{OT1}{cmtt}{bx}{n}
  {<5><6><7><8>cmtt8
   <9>cmbtt9
   <10><10.95><12><14.4><17.28><20.74><24.88>cmbtt10}{}
\DeclareFontShape{OT1}{cmtex}{bx}{n}
  {<-> ssub * cmtt/bx/n}{}
\newcommand{\Conid}[1]{\mathit{#1}}
\newcommand{\Varid}[1]{\mathit{#1}}
\newcommand{\anonymous}{\kern0.06em \vbox{\hrule\@width.5em}}
\newcommand{\plus}{\mathbin{+\!\!\!+}}
\newcommand{\bind}{\mathbin{>\!\!\!>\mkern-6.7mu=}}
\newdimen\mathindent\mathindent\leftmargini}%
\def\resethooks{%
  \global\let\SaveRestoreHook\empty
  \global\let\ColumnHook\empty}
\newcommand*{\savecolumns}[1][default]%
  {\g@addto@macro\SaveRestoreHook{\savecolumns[#1]}}
\newcommand*{\restorecolumns}[1][default]%
  {\g@addto@macro\SaveRestoreHook{\restorecolumns[#1]}}
\newcommand*{\aligncolumn}[2]%
  {\g@addto@macro\ColumnHook{\column{#1}{#2}}}
\newcommand{\onelinecommentchars}{\quad-{}- }
\newcommand{\commentbeginchars}{\enskip\{-}
\newcommand{\commentendchars}{-\}\enskip}
\newcommand{\visiblecomments}{%
  \let\onelinecomment=\onelinecommentchars
  \let\commentbegin=\commentbeginchars
  \let\commentend=\commentendchars}
\newcommand{\invisiblecomments}{%
  \let\onelinecomment=\empty
  \let\commentbegin=\empty
  \let\commentend=\empty}
\newlength{\blanklineskip}
\newcommand{\hsindent}[1]{\quad}
\let\hspre\empty
\let\hspost\empty
\let\HaskellResetHook\empty
\newcommand*{\AtHaskellReset}[1]{%
  \g@addto@macro\HaskellResetHook{#1}}
\newcommand*{\HaskellReset}{\HaskellResetHook}
\newcommand\hsforall{\global\let\hsdot=\hsperiodonce}
\newcommand*\hsperiodonce[2]{#2\global\let\hsdot=\hscompose}
\newcommand*\hscompose[2]{#1}
\let\Conid\mathsf
\def\commentbegin{\quad$[\![\enskip$}
\def\commentend{$\enskip]\!]$}
\newcommand{\logrel}{\ensuremath{\Delta}}
\newcommand{\id}{\ensuremath{\mathit{id}}}
\begin{document}

\title{Free Theorems Simply, via Dinaturality}



\author{Janis Voigtl\"{a}nder}

\authorrunning{J.\ Voigtl\"{a}nder}

\institute{University of Duisburg-Essen, Germany\\
  \medskip
  \email{janis.voigtlaender@uni-due.de}
}

\maketitle

\begin{abstract}
Free theorems are a popular tool in reasoning about parametrically polymorphic code.
They are also of instructive use in teaching.
Their derivation, though, can be tedious, as it involves unfolding a lot of definitions, then hoping to be able to simplify the resulting logical formula to something nice and short.
Even in a mechanised generator it is not easy to get the right heuristics in place to achieve good outcomes.
Dinaturality is a categorical abstraction that captures many instances of free theorems.
Arguably, its origins are more conceptually involved to explain, though, and generating useful statements from it also has its pitfalls.
We present a simple approach for obtaining dinaturality-related free theorems from the standard formulation of relational parametricity in a rather direct
way.
It is conceptually appealing and easy to control and implement, as the provided Haskell code shows.
%
\end{abstract}

\section{Introduction}

Free theorems \citep{wad89} are an attractive means of reasoning about programs in a polymorphically typed language, predominantly used in a pure functional setting, but also available to functional-logic programmers \citep{mssv14}.
They have been employed for compiler optimisations \citep{glp93}
and other applications, and can also be used (when generated for deliberately arbitrary polymorphic types) to provide insight into the declarative nature of types and semantics of programs while teaching.
Free theorems are derived from relational parametricity \citep{rey83}, and the actual process of deriving them can be tedious.
We discuss an approach that side-steps the need to explicitly unfold definitions of relational actions and subsequently manipulate higher-order logic formulae.
That there is a relationship between relational parametricity and categorical dinaturality is not news at all \citep{bfss90}, and has been used to impressive effect lately \citep{hh15}, but we show that one can do without explicitly involving any category theory concepts, instead discovering all we need along the way.
Together with deterministic simplification rules, we obtain a compact and predictable free theorems generator.
We provide a neat implementation using the higher-order abstract syntax \citep{pe88} and normalisation by evaluation \citep{bs91} principles.

In the remainder of the paper, we are going to explain and discuss the standard approach of deriving free theorems via relational parametricity, first very informally (Section~\ref{sec:old}), then by somewhat superficially invoking its usual formal presentation (Section~\ref{sec:parametricity}), after which we ``discover'' our bridge to the simpler approach (Sections~\ref{sec:conjuring} and~\ref{sec:dinaturality}), and conclude with pragmatics and implementation (rest of Section~\ref{sec:new} and Section~\ref{sec:impl}).

\section{How free theorems are usually derived}\label{sec:old}

For the sake of simplicity, we consider only the case of declarations polymorphic in exactly one type variable, i.e., types like \ensuremath{(\alpha \to \Conid{Bool})\to [\mskip1.5mu \alpha \mskip1.5mu]\to \Conid{Maybe}\;\alpha } but not like \ensuremath{\alpha \to \beta \to (\alpha ,\beta )}.
Extension to cases like the latter would be possible.

\subsection{Constructing relations}\label{sec:relations}

The key to deriving free theorems is to interpret types as relations \citep{rey83,wad89}.
For example, given the type signature \ensuremath{\Varid{f}\mathbin{::}(\alpha \to \Conid{Bool})\to [\mskip1.5mu \alpha \mskip1.5mu]\to \Conid{Maybe}\;\alpha }, we replace the type variable \ensuremath{\alpha } by a relation variable \ensuremath{\mathcal{R}}, thus obtaining \ensuremath{(\mathcal{R}\to \Conid{Bool})\to ([\mskip1.5mu \mathcal{R}\mskip1.5mu]\to \Conid{Maybe}\;\mathcal{R})}.
Eventually, we will allow (nearly) arbitrary relations between closed types \ensuremath{\tau _{\mathrm{1}}} and \ensuremath{\tau _{\mathrm{2}}}, denoted \ensuremath{\mathcal{R}\in \Varid{Rel}(\tau _{\mathrm{1}},\tau _{\mathrm{2}})}, as interpretations for relation variables.
Also, there is a systematic way of reading expressions over relations as relations themselves.
In particular, 
\begin{itemize}
\item 
  base types like \ensuremath{\Conid{Bool}} and \ensuremath{\Conid{Int}} are read as identity relations, 

\item
  for relations \ensuremath{\mathcal{R}_{\mathrm{1}}} and \ensuremath{\mathcal{R}_{\mathrm{2}}}, we have
  \[\ensuremath{\mathcal{R}_{\mathrm{1}}\to \mathcal{R}_{\mathrm{2}}\mathrel{=}\{\mskip1.5mu (\Varid{f},\Varid{g})\mid \forall (\Varid{a},\Varid{b})\hsforall \in \mathcal{R}_{\mathrm{1}}\hsdot{\circ }{.}~(\Varid{f}\;\Varid{a},\Varid{g}\;\Varid{b})\in \mathcal{R}_{\mathrm{2}}\mskip1.5mu\}}\]
  and

\item
  every
  type constructor is read as an appropriate construction on relations;
  for example, the list type constructor maps every relation \ensuremath{\mathcal{R}\in \Varid{Rel}(\tau _{\mathrm{1}},\tau _{\mathrm{2}})} to the relation \ensuremath{[\mskip1.5mu \mathcal{R}\mskip1.5mu]\in \Varid{Rel}([\mskip1.5mu \tau _{\mathrm{1}}\mskip1.5mu],[\mskip1.5mu \tau _{\mathrm{2}}\mskip1.5mu])} defined by (the least fixpoint of)
  \[\ensuremath{[\mskip1.5mu \mathcal{R}\mskip1.5mu]\mathrel{=}\{\mskip1.5mu ([\mskip1.5mu \mskip1.5mu],[\mskip1.5mu \mskip1.5mu])\mskip1.5mu\}\mathop{\cup}\,\{\mskip1.5mu (\Varid{a}\mathbin{:}\Varid{as},\Varid{b}\mathbin{:}\Varid{bs})\mid (\Varid{a},\Varid{b})\in \mathcal{R},(\Varid{as},\Varid{bs})\in [\mskip1.5mu \mathcal{R}\mskip1.5mu]\mskip1.5mu\}}\]
  while the \ensuremath{\Conid{Maybe}} type constructor maps \ensuremath{\mathcal{R}\in \Varid{Rel}(\tau _{\mathrm{1}},\tau _{\mathrm{2}})} to \ensuremath{\Conid{Maybe}\;\mathcal{R}\in \Varid{Rel}(\Conid{Maybe}\;\tau _{\mathrm{1}},\Conid{Maybe}\;\tau _{\mathrm{2}})} defined by
  \[\ensuremath{\Conid{Maybe}\;\mathcal{R}\mathrel{=}\{\mskip1.5mu (\Conid{Nothing},\Conid{Nothing})\mskip1.5mu\}\mathop{\cup}\,\{\mskip1.5mu (\Conid{Just}\;\Varid{a},\Conid{Just}\;\Varid{b})\mid (\Varid{a},\Varid{b})\in \mathcal{R}\mskip1.5mu\}}\]
  and similarly for other datatypes.
\end{itemize}
The central statement of relational parametricity now is that
for every choice of \ensuremath{\tau _{\mathrm{1}}}, \ensuremath{\tau _{\mathrm{2}}}, and \ensuremath{\mathcal{R}}, the instantiations of the polymorphic \ensuremath{\Varid{f}} to types \ensuremath{\tau _{\mathrm{1}}} and \ensuremath{\tau _{\mathrm{2}}} are related by the relational interpretation of \ensuremath{\Varid{f}}'s type.
For the above example, this means that \ensuremath{(\Varid{f}_{\tau_1},\Varid{f}_{\tau_2})\in (\mathcal{R}\to \Varid{id}_\Conid{Bool})\to ([\mskip1.5mu \mathcal{R}\mskip1.5mu]\to \Conid{Maybe}\;\mathcal{R})}.
From now on, type subscripts will often be omitted since they can be easily inferred.

\subsection{Unfolding definitions}

To continue with the derivation of a free theorem in the standard way, one has to unfold the definitions of the various actions on relations described above.
For the example:
\begingroup\par\vskip\abovedisplayskip\noindent\advance\leftskip\mathindent\(
\begin{pboxed}\SaveRestoreHook
\column{B}{@{}>{\hspre}c<{\hspost}@{}}%
\column{BE}{@{}l@{}}%
\column{3}{@{}>{\hspre}l<{\hspost}@{}}%
\column{6}{@{}>{\hspre}l<{\hspost}@{}}%
\column{84}{@{}>{\hspre}l<{\hspost}@{}}%
\column{E}{@{}>{\hspre}l<{\hspost}@{}}%
\>[3]{}(\Varid{f},\Varid{f})\in (\mathcal{R}\to \Varid{id})\to ([\mskip1.5mu \mathcal{R}\mskip1.5mu]\to \Conid{Maybe}\;\mathcal{R}){}\<[E]%
\\
\>[B]{}\Leftrightarrow{}\<[BE]%
\>[6]{}\mbox{\commentbegin  definition of \ensuremath{\mathcal{R}_{\mathrm{1}}\to \mathcal{R}_{\mathrm{2}}}  \commentend}{}\<[E]%
\\
\>[B]{}\hsindent{3}{}\<[3]%
\>[3]{}\forall (\Varid{a},\Varid{b})\hsforall \in \mathcal{R}\to \Varid{id}\hsdot{\circ }{.}~(\Varid{f}\;\Varid{a},\Varid{f}\;\Varid{b})\in [\mskip1.5mu \mathcal{R}\mskip1.5mu]\to \Conid{Maybe}\;\mathcal{R}{}\<[E]%
\\
\>[B]{}\Leftrightarrow{}\<[BE]%
\>[6]{}\mbox{\commentbegin  again  \commentend}{}\<[E]%
\\
\>[B]{}\hsindent{3}{}\<[3]%
\>[3]{}\forall (\Varid{a},\Varid{b})\hsforall \in \mathcal{R}\to \Varid{id},(\Varid{c},\Varid{d})\in [\mskip1.5mu \mathcal{R}\mskip1.5mu]\hsdot{\circ }{.}~(\Varid{f}\;\Varid{a}\;\Varid{c},\Varid{f}\;\Varid{b}\;\Varid{d})\in \Conid{Maybe}\;\mathcal{R}{}\<[84]%
\>[84]{}~~~~~~\ensuremath{(\mathbin{*})}{}\<[E]%
\ColumnHook
\end{pboxed}
\)\par\vskip\belowdisplayskip\noindent\endgroup\resethooks
Now it is useful to specialise the relation \ensuremath{\mathcal{R}} to the ``graph'' of a function \ensuremath{\Varid{g}\mathbin{::}\tau _{\mathrm{1}}\to \tau _{\mathrm{2}}}, i.e., setting \ensuremath{\mathcal{R}\mathrel{=}\Varid{graph}(\Varid{g})\mathbin{:=}\{\mskip1.5mu (\Varid{x},\Varid{y})\mid \Varid{g}\;\Varid{x}\mathrel{=}\Varid{y}\mskip1.5mu\}\in \Varid{Rel}(\tau _{\mathrm{1}},\tau _{\mathrm{2}})}, and to realise that then \ensuremath{[\mskip1.5mu \mathcal{R}\mskip1.5mu]\mathrel{=}\Varid{graph}(\Varid{map}\;\Varid{g})} and \ensuremath{\Conid{Maybe}\;\mathcal{R}\mathrel{=}\Varid{graph}(\Varid{fmap}\;\Varid{g})}, so that we can continue as follows:
\begingroup\par\vskip\abovedisplayskip\noindent\advance\leftskip\mathindent\(
\begin{pboxed}\SaveRestoreHook
\column{B}{@{}>{\hspre}c<{\hspost}@{}}%
\column{BE}{@{}l@{}}%
\column{3}{@{}>{\hspre}l<{\hspost}@{}}%
\column{6}{@{}>{\hspre}l<{\hspost}@{}}%
\column{89}{@{}>{\hspre}l<{\hspost}@{}}%
\column{E}{@{}>{\hspre}l<{\hspost}@{}}%
\>[3]{}\forall (\Varid{a},\Varid{b})\hsforall \in \Varid{graph}(\Varid{g})\to \Varid{id},(\Varid{c},\Varid{d})\in \Varid{graph}(\Varid{map}\;\Varid{g})\hsdot{\circ }{.}~(\Varid{f}\;\Varid{a}\;\Varid{c},\Varid{f}\;\Varid{b}\;\Varid{d})\in \Varid{graph}(\Varid{fmap}\;\Varid{g}){}\<[E]%
\\
\>[B]{}\Leftrightarrow{}\<[BE]%
\>[6]{}\mbox{\commentbegin  \ensuremath{(\Varid{x},\Varid{y})\in \Varid{graph}(\Varid{h})} iff \ensuremath{\Varid{h}\;\Varid{x}\mathrel{=}\Varid{y}}  \commentend}{}\<[E]%
\\
\>[B]{}\hsindent{3}{}\<[3]%
\>[3]{}\forall (\Varid{a},\Varid{b})\hsforall \in \Varid{graph}(\Varid{g})\to \Varid{id},\Varid{c}\mathbin{::}[\mskip1.5mu \tau _{\mathrm{1}}\mskip1.5mu]\hsdot{\circ }{.}~\Varid{fmap}\;\Varid{g}\;(\Varid{f}\;\Varid{a}\;\Varid{c})\mathrel{=}\Varid{f}\;\Varid{b}\;(\Varid{map}\;\Varid{g}\;\Varid{c}){}\<[89]%
\>[89]{}~~~~~~\ensuremath{(\mathbin{**})}{}\<[E]%
\ColumnHook
\end{pboxed}
\)\par\vskip\belowdisplayskip\noindent\endgroup\resethooks
It remains to find out what \ensuremath{(\Varid{a},\Varid{b})\in \Varid{graph}(\Varid{g})\to \Varid{id}} means.
We can do so as follows:
\begingroup\par\vskip\abovedisplayskip\noindent\advance\leftskip\mathindent\(
\begin{pboxed}\SaveRestoreHook
\column{B}{@{}>{\hspre}c<{\hspost}@{}}%
\column{BE}{@{}l@{}}%
\column{3}{@{}>{\hspre}l<{\hspost}@{}}%
\column{6}{@{}>{\hspre}l<{\hspost}@{}}%
\column{E}{@{}>{\hspre}l<{\hspost}@{}}%
\>[3]{}(\Varid{a},\Varid{b})\in \Varid{graph}(\Varid{g})\to \Varid{id}{}\<[E]%
\\
\>[B]{}\Leftrightarrow{}\<[BE]%
\>[6]{}\mbox{\commentbegin  definition of \ensuremath{\mathcal{R}_{\mathrm{1}}\to \mathcal{R}_{\mathrm{2}}}  \commentend}{}\<[E]%
\\
\>[B]{}\hsindent{3}{}\<[3]%
\>[3]{}\forall (\Varid{x},\Varid{y})\hsforall \in \Varid{graph}(\Varid{g})\hsdot{\circ }{.}~(\Varid{a}\;\Varid{x},\Varid{b}\;\Varid{y})\in \Varid{id}{}\<[E]%
\\
\>[B]{}\Leftrightarrow{}\<[BE]%
\>[6]{}\mbox{\commentbegin  functions as relations  \commentend}{}\<[E]%
\\
\>[B]{}\hsindent{3}{}\<[3]%
\>[3]{}\forall \Varid{x}\hsforall \mathbin{::}\tau _{\mathrm{1}}\hsdot{\circ }{.}~\Varid{a}\;\Varid{x}\mathrel{=}\Varid{b}\;(\Varid{g}\;\Varid{x}){}\<[E]%
\\
\>[B]{}\Leftrightarrow{}\<[BE]%
\>[6]{}\mbox{\commentbegin  make pointfree  \commentend}{}\<[E]%
\\
\>[B]{}\hsindent{3}{}\<[3]%
\>[3]{}\Varid{a}\mathrel{=}\Varid{b}\hsdot{\circ }{.}\Varid{g}{}\<[E]%
\ColumnHook
\end{pboxed}
\)\par\vskip\belowdisplayskip\noindent\endgroup\resethooks
Finally, we obtain, for every \ensuremath{\Varid{f}\mathbin{::}(\alpha \to \Conid{Bool})\to [\mskip1.5mu \alpha \mskip1.5mu]\to \Conid{Maybe}\;\alpha }, \ensuremath{\Varid{g}\mathbin{::}\tau _{\mathrm{1}}\to \tau _{\mathrm{2}}}, \ensuremath{\Varid{b}\mathbin{::}\tau _{\mathrm{2}}\to \Conid{Bool}}, and \ensuremath{\Varid{c}\mathbin{::}[\mskip1.5mu \tau _{\mathrm{1}}\mskip1.5mu]},
\[\ensuremath{\Varid{fmap}\;\Varid{g}\;(\Varid{f}\;(\Varid{b}\hsdot{\circ }{.}\Varid{g})\;\Varid{c})\mathrel{=}\Varid{f}\;\Varid{b}\;(\Varid{map}\;\Varid{g}\;\Varid{c})}\]
or, if we prefer this
statement pointfree as well,
$\ensuremath{\Varid{fmap}\;\Varid{g}\hsdot{\circ }{.}\Varid{f}\;(\Varid{b}\hsdot{\circ }{.}\Varid{g})\mathrel{=}\Varid{f}\;\Varid{b}\hsdot{\circ }{.}\Varid{map}\;\Varid{g}}$.
The power of such statements is that \ensuremath{\Varid{f}} is only restricted by its type -- its behaviour can vary considerably within these confines, and still results obtained as free theorems will be guaranteed to hold.

\subsection{Typical complications}

So what is there not to like about the above procedure?
First of all, always unfolding the definitions of the relational actions -- specifically, the \ensuremath{\mathcal{R}_{\mathrm{1}}\to \mathcal{R}_{\mathrm{2}}} definition -- is tedious, though mechanical.
It typically brings us to something like \ensuremath{(\mathbin{*})} or \ensuremath{(\mathbin{**})} above.
Then, specifically if our \ensuremath{\Varid{f}} has a higher-order type, we will have to deal with preconditions like \ensuremath{(\Varid{a},\Varid{b})\in \mathcal{R}\to \Varid{id}} or \ensuremath{(\Varid{a},\Varid{b})\in \Varid{graph}(\Varid{g})\to \Varid{id}}.
Here we have seen, again by unfolding definitions, that the latter is equivalent to \ensuremath{\Varid{a}\mathrel{=}\Varid{b}\hsdot{\circ }{.}\Varid{g}}, which enabled simplification of statement \ensuremath{(\mathbin{**})} by eliminating the variable \ensuremath{\Varid{a}} completely. 
But in general this can become arbitrarily complicated.
If, for example, our \ensuremath{\Varid{f}} of interest had the type \ensuremath{(\alpha \to \alpha \to \Conid{Bool})\to [\mskip1.5mu \alpha \mskip1.5mu]\to [\mskip1.5mu \alpha \mskip1.5mu]}, we would have to deal with a precondition \ensuremath{(\Varid{a},\Varid{b})\in \Varid{graph}(\Varid{g})\to \Varid{graph}(\Varid{g})\to \Varid{id}} instead.
By similar steps as above, one can show that this is equivalent to
\ensuremath{\forall \Varid{x}\hsforall \mathbin{::}\tau _{\mathrm{1}},\Varid{y}\mathbin{::}\tau _{\mathrm{1}}\hsdot{\circ }{.}~\Varid{a}\;\Varid{x}\;\Varid{y}\mathrel{=}\Varid{b}\;(\Varid{g}\;\Varid{x})\;(\Varid{g}\;\Varid{y})}
or
\ensuremath{\forall \Varid{x}\hsforall \mathbin{::}\tau _{\mathrm{1}}\hsdot{\circ }{.}~\Varid{a}\;\Varid{x}\mathrel{=}\Varid{b}\;(\Varid{g}\;\Varid{x})\hsdot{\circ }{.}\Varid{g}}
or something even more cryptic if one insists on complete pointfreeness (to express the condition in the form ``\ensuremath{\Varid{a}\mathrel{=}\dots}'' in order to eliminate the explicit precondition by inlining).
One might prepare and keep in mind the simplifications of some common cases like those above, but in general, since the type of \ensuremath{\Varid{f}}, and thus of course also the types of higher-order arguments it may have, can be arbitrary and more ``exotic'' than above (in particular, possibly involving further nesting of function arrows -- consider, e.g., we had started with \ensuremath{\Varid{f}\mathbin{::}(([\mskip1.5mu \alpha \mskip1.5mu]\to \Conid{Int})\to \alpha )\to \alpha } as the target type), we are eventually down to unfolding the definitions of relational actions.
We can only \emph{hope} then to ultimately be able to also fold back into some compact form of precondition like was the case above.

Moreover, the picture is complicated by the fact that the procedure, exactly as described so far, applies only to the most basic language setting, namely a functional language in which there are no undefined values and all functions are total.
As soon as we consider a more realistic or interesting setting, some changes become required.
Typically that involves restricting the choice of relations over which one can quantify, but also changes to the relational actions that may or may not have a larger impact on the procedure of deriving free theorems.
Specifically, already when taking possible undefinedness and partiality of functions into account, one may only use relations that are strict (i.e., \ensuremath{(\bot ,\bot )\in \mathcal{R}}) and additionally has to use versions of datatype liftings that relate partial structures (e.g., \ensuremath{[\mskip1.5mu \mathcal{R}\mskip1.5mu]\mathrel{=}\{\mskip1.5mu (\bot ,\bot ),([\mskip1.5mu \mskip1.5mu],[\mskip1.5mu \mskip1.5mu])\mskip1.5mu\}\mathop{\cup}\,\{\mskip1.5mu (\Varid{a}\mathbin{:}\Varid{as},\Varid{b}\mathbin{:}\Varid{bs})\mid \dots\mskip1.5mu\}}).
This is not very severe yet, since strictness of relations simply translates into strictness of functions and connections like \ensuremath{[\mskip1.5mu \mathcal{R}\mskip1.5mu]\mathrel{=}\Varid{graph}(\Varid{map}\;\Varid{g})} for \ensuremath{\mathcal{R}\mathrel{=}\Varid{graph}(\Varid{g})} remain intact, so there is no considerable impact on the derivation steps.
But if one additionally takes Haskell's \ensuremath{\Varid{seq}}-primitive into account, more changes become required \citep{jv04}.
Now relations must also be total (i.e., \ensuremath{(\Varid{a},\Varid{b})\in \mathcal{R}} implies \ensuremath{\Varid{a}\mathrel{=}\bot \Leftrightarrow\Varid{b}\mathrel{=}\bot }) and additionally the relational action for function types must be changed to
\begingroup\par\vskip\abovedisplayskip\noindent\advance\leftskip\mathindent\(
\begin{pboxed}\SaveRestoreHook
\column{B}{@{}>{\hspre}l<{\hspost}@{}}%
\column{E}{@{}>{\hspre}l<{\hspost}@{}}%
\>[B]{}~~~~~~\mathcal{R}_{\mathrm{1}}\to \mathcal{R}_{\mathrm{2}}\mathrel{=}\{\mskip1.5mu (\Varid{f},\Varid{g})\mid \Varid{f}\mathrel{=}\bot \Leftrightarrow\Varid{g}\mathrel{=}\bot ,~\forall (\Varid{a},\Varid{b})\hsforall \in \mathcal{R}_{\mathrm{1}}\hsdot{\circ }{.}~(\Varid{f}\;\Varid{a},\Varid{g}\;\Varid{b})\in \mathcal{R}_{\mathrm{2}}\mskip1.5mu\}{}\<[E]%
\ColumnHook
\end{pboxed}
\)\par\vskip\belowdisplayskip\noindent\endgroup\resethooks
The latter \emph{does} have an impact on the derivation steps, since these typically (like in the examples above) use the definition of \ensuremath{\mathcal{R}_{\mathrm{1}}\to \mathcal{R}_{\mathrm{2}}} a lot, and now must manage the extra conditions concerning undefinedness.
Also, some simplifications become invalid in this setting.
Note that in the first example above we used that the precondition \ensuremath{\forall \Varid{x}\hsforall \mathbin{::}\tau _{\mathrm{1}}\hsdot{\circ }{.}~\Varid{a}\;\Varid{x}\mathrel{=}\Varid{b}\;(\Varid{g}\;\Varid{x})} is equivalent to \ensuremath{\Varid{a}\mathrel{=}\Varid{b}\hsdot{\circ }{.}\Varid{g}}.
But not in a language including \ensuremath{\Varid{seq}}, since in such a language eta-reduction is not generally valid (e.g., \ensuremath{\forall \Varid{x}\hsforall \hsdot{\circ }{.}~\bot \;\Varid{x}\mathrel{=}\bot \;(\Varid{id}\;\Varid{x})} but not \ensuremath{\bot \mathrel{=}\bot \hsdot{\circ }{.}\Varid{id}})!
We might still be safe, since the condition \ensuremath{\forall \Varid{x}\hsforall \mathbin{::}\tau _{\mathrm{1}}\hsdot{\circ }{.}~\Varid{a}\;\Varid{x}\mathrel{=}\Varid{b}\;(\Varid{g}\;\Varid{x})} is at least implied by \ensuremath{\Varid{a}\mathrel{=}\Varid{b}\hsdot{\circ }{.}\Varid{g}}, so depending on where that explicitly quantifying statement appeared in the overall statement we may obtain a weakening or a strengthening of that overall statement by replacing one condition by the other.
But such considerations require careful management of the preconditions and their positions in nested implication statements.
All this can still be done automatically~\citep{package-free-theorems}, but it is no pleasure.
There is not as much reuse as one might want, different simplification heuristics have to be used for different language settings, there is no really deterministic algorithm but instead some search involved, and sometimes the only ``simplification'' that seems to work is to unfold all definitions and leave it at that.
Moreover, if one were to move on and consider automatic generation of free theorems for further language settings, like imprecise error semantics \citep{sv09}, then the story would repeat itself.
There would be yet another set of changes to the basic definitions for relations and relational actions, new things to take care of during simplification of candidate free theorems, etc.

\subsection{Some problematic examples, and outlook at a remedy}\label{sec:problems}

Let us substantiate the above observations with some additional examples.
First we consider the declaration \ensuremath{\Varid{f}\mathbin{::}(([\mskip1.5mu \alpha \mskip1.5mu]\to \Conid{Int})\to \alpha )\to \alpha }.
Our existing free theorems generator library mentioned above~\citep{package-free-theorems} produces the statement that for every \ensuremath{\Varid{g}\mathbin{::}\tau _{\mathrm{1}}\to \tau _{\mathrm{2}}}, \ensuremath{\Varid{p}\mathbin{::}([\mskip1.5mu \tau _{\mathrm{1}}\mskip1.5mu]\to \Conid{Int})\to \tau _{\mathrm{1}}}, and \ensuremath{\Varid{q}\mathbin{::}([\mskip1.5mu \tau _{\mathrm{2}}\mskip1.5mu]\to \Conid{Int})\to \tau _{\mathrm{2}}}, it holds:\footnote{The new web UI for the library created by Joachim Breitner~\citep{nomeata-free-theorems} actually appears to not apply all possible simplifications, so the statement remains even a bit more complicated there.}
\begingroup\par\vskip\abovedisplayskip\noindent\advance\leftskip\mathindent\(
\begin{pboxed}\SaveRestoreHook
\column{B}{@{}>{\hspre}l<{\hspost}@{}}%
\column{3}{@{}>{\hspre}l<{\hspost}@{}}%
\column{E}{@{}>{\hspre}l<{\hspost}@{}}%
\>[B]{}(\forall \Varid{r}\hsforall \mathbin{::}[\mskip1.5mu \tau _{\mathrm{1}}\mskip1.5mu]\to \Conid{Int},\Varid{s}\mathbin{::}[\mskip1.5mu \tau _{\mathrm{2}}\mskip1.5mu]\to \Conid{Int}\hsdot{\circ }{.}~{}\<[E]%
\\
\>[B]{}\hsindent{3}{}\<[3]%
\>[3]{}(\forall \Varid{x}\hsforall \mathbin{::}[\mskip1.5mu \tau _{\mathrm{1}}\mskip1.5mu]\hsdot{\circ }{.}~\Varid{r}\;\Varid{x}\mathrel{=}\Varid{s}\;(\Varid{map}\;\Varid{g}\;\Varid{x}))\Rightarrow (\Varid{g}\;(\Varid{p}\;\Varid{r})\mathrel{=}\Varid{q}\;\Varid{s})){}\<[E]%
\\
\>[B]{}\Rightarrow (\Varid{g}\;(\Varid{f}\;\Varid{p})\mathrel{=}\Varid{f}\;\Varid{q}){}\<[E]%
\ColumnHook
\end{pboxed}
\)\par\vskip\belowdisplayskip\noindent\endgroup\resethooks
Arguably, it would have been more useful to be given the equivalent statement that for every \ensuremath{\Varid{f}}, \ensuremath{\Varid{g}}, \ensuremath{\Varid{p}} with types as above,
\begin{equation}
  \label{eq:1}
  \ensuremath{\Varid{g}\;(\Varid{f}\;\Varid{p})\mathrel{=}\Varid{f}\;(\lambda \Varid{s}\to \Varid{g}\;(\Varid{p}\;(\lambda \Varid{x}\to \Varid{s}\;(\Varid{map}\;\Varid{g}\;\Varid{x}))))}
\end{equation}
There is another free theorems generator as part of another tool, by Andrew Bromage~\citep{package-lambdabot}, and it does quite okay here, generating this:
\ensuremath{(\forall \Varid{p}\hsforall \hsdot{\circ }{.}~\Varid{g}\;(\Varid{h}\;(\Varid{p}\hsdot{\circ }{.}\Varid{map}\;\Varid{g}))\mathrel{=}\Varid{k}\;\Varid{p})\Rightarrow \Varid{g}\;(\Varid{f}\;\Varid{h})\mathrel{=}\Varid{f}\;\Varid{k}}.
But if we make the input type a bit more nasty by more nesting of function arrows, \ensuremath{\Varid{f}\mathbin{::}(((([\mskip1.5mu \alpha \mskip1.5mu]\to \Conid{Int})\to \Conid{Int})\to \Conid{Int})\to \alpha )\to \alpha }, then the existing generators differ only slightly from each other, and both yield something like the following:
\begingroup\par\vskip\abovedisplayskip\noindent\advance\leftskip\mathindent\(
\begin{pboxed}\SaveRestoreHook
\column{B}{@{}>{\hspre}l<{\hspost}@{}}%
\column{E}{@{}>{\hspre}l<{\hspost}@{}}%
\>[B]{}(\forall \Varid{r}\hsforall ,\Varid{s}\hsdot{\circ }{.}~(\forall \Varid{t}\hsforall ,\Varid{u}\hsdot{\circ }{.}~(\forall \Varid{w}\hsforall \hsdot{\circ }{.}~\Varid{t}\;(\Varid{w}\hsdot{\circ }{.}\Varid{map}\;\Varid{g})\mathrel{=}\Varid{u}\;\Varid{w})\Rightarrow (\Varid{r}\;\Varid{t}\mathrel{=}\Varid{s}\;\Varid{u}))\Rightarrow (\Varid{g}\;(\Varid{p}\;\Varid{r})\mathrel{=}\Varid{q}\;\Varid{s})){}\<[E]%
\\
\>[B]{}\Rightarrow (\Varid{g}\;(\Varid{f}\;\Varid{p})\mathrel{=}\Varid{f}\;\Varid{q}){}\<[E]%
\ColumnHook
\end{pboxed}
\)\par\vskip\belowdisplayskip\noindent\endgroup\resethooks
It would have been nicer to be given the following:
\begin{equation}
  \label{eq:2}
  \ensuremath{\Varid{g}\;(\Varid{f}\;\Varid{p})\mathrel{=}\Varid{f}\;(\lambda \Varid{s}\to \Varid{g}\;(\Varid{p}\;(\lambda \Varid{t}\to \Varid{s}\;(\lambda \Varid{w}\to \Varid{t}\;(\lambda \Varid{x}\to \Varid{w}\;(\Varid{map}\;\Varid{g}\;\Varid{x}))))))}
\end{equation}
which is exactly what the approach to be presented here will yield (modulo variable names).
Of course, one could invest into further post-processing steps in the existing generators to get from the scary form of the statement to the more readable, equivalent one.
But at some point, this will always be only partially successful.
Going from a compact relational expression to a quantifier-rich formula in higher-order logic through unfolding of definitions, and then trying to recover a more readable form via generic HOL formula manipulations, will generally be beaten by an approach better exploiting the structure present in the original type expression -- which is what we will do.
We will always generate a simple equation between two lambda-expressions, without precondition statements, as in (\ref{eq:1}) and (\ref{eq:2}) above.

Moreover, there is still the issue of the variability of free theorems between different language settings.
The generator inside Lambdabot does not consider such impact of language features, and thus the theorems it outputs are not safe in the presence of \ensuremath{\Varid{seq}}.
Our own previous generator does, and thus adds the proper extra conditions concerning undefinedness.
For example, for the more complicated of the two types considered above, the output then is (besides a strictness and totality condition imposed on \ensuremath{\Varid{g}})\footnote{Something we will not mention again and again is that \ensuremath{\Varid{g}} is also itself non-\ensuremath{\bot }. Disregarding types that contain only \ensuremath{\bot }, this follows from totality of \ensuremath{\Varid{g}} anyway.}:
\begingroup\par\vskip\abovedisplayskip\noindent\advance\leftskip\mathindent\(
\begin{pboxed}\SaveRestoreHook
\column{B}{@{}>{\hspre}l<{\hspost}@{}}%
\column{4}{@{}>{\hspre}l<{\hspost}@{}}%
\column{24}{@{}>{\hspre}l<{\hspost}@{}}%
\column{41}{@{}>{\hspre}l<{\hspost}@{}}%
\column{44}{@{}>{\hspre}l<{\hspost}@{}}%
\column{E}{@{}>{\hspre}l<{\hspost}@{}}%
\>[B]{}({}\<[4]%
\>[4]{}(\Varid{p}\neq\bot \Leftrightarrow\Varid{q}\neq\bot ){}\<[E]%
\\
\>[4]{}\mathrel{\wedge}(\forall \Varid{r}\hsforall ,\Varid{s}\hsdot{\circ }{.}~{}\<[24]%
\>[24]{}(\forall \Varid{t}\hsforall ,\Varid{u}\hsdot{\circ }{.}~{}\<[41]%
\>[41]{}({}\<[44]%
\>[44]{}(\Varid{t}\neq\bot \Leftrightarrow\Varid{u}\neq\bot ){}\<[E]%
\\
\>[44]{}\mathrel{\wedge}(\forall \Varid{v}\hsforall ,\Varid{w}\hsdot{\circ }{.}~(\forall \Varid{x}\hsforall \hsdot{\circ }{.}~\Varid{v}\;\Varid{x}\mathrel{=}\Varid{w}\;(\Varid{map}\;\Varid{g}\;\Varid{x}))\Rightarrow (\Varid{t}\;\Varid{v}\mathrel{=}\Varid{u}\;\Varid{w}))){}\<[E]%
\\
\>[41]{}\Rightarrow (\Varid{r}\;\Varid{t}\mathrel{=}\Varid{s}\;\Varid{u})){}\<[E]%
\\
\>[24]{}\Rightarrow (\Varid{g}\;(\Varid{p}\;\Varid{r})\mathrel{=}\Varid{q}\;\Varid{s}))){}\<[E]%
\\
\>[B]{}\Rightarrow (\Varid{g}\;(\Varid{f}\;\Varid{p})\mathrel{=}\Varid{f}\;\Varid{q}){}\<[E]%
\ColumnHook
\end{pboxed}
\)\par\vskip\belowdisplayskip\noindent\endgroup\resethooks
In contrast, with the approach to be presented we will get:
\begingroup\par\vskip\abovedisplayskip\noindent\advance\leftskip\mathindent\(
\begin{pboxed}\SaveRestoreHook
\column{B}{@{}>{\hspre}l<{\hspost}@{}}%
\column{E}{@{}>{\hspre}l<{\hspost}@{}}%
\>[B]{}~~~~~\Varid{g}\;(\Varid{f}\;(\lambda \Varid{s}\to \Varid{p}\;(\lambda \Varid{t}\to \Varid{s}\;(\lambda \Varid{w}\to \Varid{t}\;(\lambda \Varid{x}\to \Varid{w}\;\Varid{x}))))){}\<[E]%
\\
\>[B]{}~~~~~~~\mathrel{=}{}\<[E]%
\\
\>[B]{}~~~~~\Varid{f}\;(\lambda \Varid{s}\to \Varid{g}\;(\Varid{p}\;(\lambda \Varid{t}\to \Varid{s}\;(\lambda \Varid{w}\to \Varid{t}\;(\lambda \Varid{x}\to \Varid{w}\;(\Varid{map}\;\Varid{g}\;\Varid{x})))))){}\<[E]%
\ColumnHook
\end{pboxed}
\)\par\vskip\belowdisplayskip\noindent\endgroup\resethooks
which \dots
\begin{enumerate}
\item
  \dots is almost as strong as the more complicated formula above it.
  The only thing that makes it weaker is that it does not express that the corner cases \ensuremath{\Varid{g}\;(\Varid{f}\;\bot )\mathrel{=}\Varid{f}\;\bot } and \ensuremath{\Varid{g}\;(\Varid{f}\;\Varid{p'})\mathrel{=}\Varid{f}\;(\Varid{g}\hsdot{\circ }{.}\Varid{p'})} with \ensuremath{\Varid{p'}} any of \ensuremath{(\lambda \Varid{s}\to \Varid{p}\;\bot )}, \ensuremath{(\lambda \Varid{s}\to \Varid{p}\;(\lambda \Varid{t}\to \bot ))}, \ensuremath{(\lambda \Varid{s}\to \Varid{p}\;(\lambda \Varid{t}\to \Varid{s}\;\bot ))}, \dots, \ensuremath{(\lambda \Varid{s}\to \Varid{p}\;(\lambda \Varid{t}\to \Varid{s}\;(\lambda \Varid{w}\to \Varid{t}\;(\lambda \Varid{x}\to \Varid{w}\;\bot ))))} also hold.

\item
  \dots simply reduces to (\ref{eq:2}) in any functional language setting in which eta-reduction is valid.
  So we will not perform different derivations for different language settings.
  (Rather, eta-reduction, when applicable, can be applied as an afterthought -- which is exactly what our implementation will do.)
\end{enumerate}

To top the mentioned benefits, the approach to free theorems derivation we will discuss is much simpler than the previous one -- simpler both conceptually (and thus also when one wants to obtain free theorems by hand) as well as when implementing it.
In fact, the generator code takes up only half a page in the appendix (without counting the code for implementing the eta-reduction functionality) -- a small fraction of the size of the corresponding code in the existing free theorems generators.\footnote{Additional code for parsing input strings into type expressions and pretty-printing generated theorem expressions back into pleasingly looking strings is of comparable complexity between the different generators.}

There is one gotcha.
It is not \emph{always} possible to express a free theorem simply as an equation without preconditions.
A typical example is the type \ensuremath{\Varid{f}\mathbin{::}(\alpha \to \alpha )\to \alpha \to \alpha }.
Its general free theorem is:
\begingroup\par\vskip\abovedisplayskip\noindent\advance\leftskip\mathindent\(
\begin{pboxed}\SaveRestoreHook
\column{B}{@{}>{\hspre}l<{\hspost}@{}}%
\column{E}{@{}>{\hspre}l<{\hspost}@{}}%
\>[B]{}~~~~~(\Varid{g}\hsdot{\circ }{.}\Varid{h}\mathrel{=}\Varid{k}\hsdot{\circ }{.}\Varid{g})\Rightarrow (\Varid{g}\hsdot{\circ }{.}\Varid{f}\;\Varid{h}\mathrel{=}\Varid{f}\;\Varid{k}\hsdot{\circ }{.}\Varid{g}){}\<[E]%
\ColumnHook
\end{pboxed}
\)\par\vskip\belowdisplayskip\noindent\endgroup\resethooks
Since even for fixed \ensuremath{\Varid{g}}, neither of \ensuremath{\Varid{h}} and \ensuremath{\Varid{k}} uniquely determines the other here, the precondition \ensuremath{\Varid{g}\hsdot{\circ }{.}\Varid{h}\mathrel{=}\Varid{k}\hsdot{\circ }{.}\Varid{g}} cannot be avoided by some way of inlining or other strategy.
The dinaturality-based approach will instead generate the unconditional statement
\begingroup\par\vskip\abovedisplayskip\noindent\advance\leftskip\mathindent\(
\begin{pboxed}\SaveRestoreHook
\column{B}{@{}>{\hspre}l<{\hspost}@{}}%
\column{E}{@{}>{\hspre}l<{\hspost}@{}}%
\>[B]{}~~~~~\Varid{g}\;(\Varid{f}\;(\lambda \Varid{y}\to \Varid{p}\;(\Varid{g}\;\Varid{y}))\;\Varid{x})\mathrel{=}\Varid{f}\;(\lambda \Varid{y}\to \Varid{g}\;(\Varid{p}\;\Varid{y}))\;(\Varid{g}\;\Varid{x}){}\<[E]%
\ColumnHook
\end{pboxed}
\)\par\vskip\belowdisplayskip\noindent\endgroup\resethooks
i.e., setting \ensuremath{\Varid{h}} to \ensuremath{\Varid{p}\hsdot{\circ }{.}\Varid{g}} and \ensuremath{\Varid{k}} to \ensuremath{\Varid{g}\hsdot{\circ }{.}\Varid{p}} for some \ensuremath{\Varid{p}}, thus certainly satisfying \ensuremath{\Varid{g}\hsdot{\circ }{.}\Varid{h}\mathrel{=}\Varid{k}\hsdot{\circ }{.}\Varid{g}}, but losing some generality.
However, we believe we can say for what sort of types this will happen (see Section~\ref{sec:loss}).

\section{Free theorems simply, via dinaturality}\label{sec:new}

So, what is the magic sauce we are going to use?
We start from the simple observation that with the standard approach, once one has done the unfolding of definitions and subsequent simplifications/compactifications, one usually ends up with an equation (possibly with preconditions) between two expressions that look somewhat similar to each other.
For example, for type \ensuremath{\Varid{f}\mathbin{::}[\mskip1.5mu \alpha \mskip1.5mu]\to [\mskip1.5mu \alpha \mskip1.5mu]} one gets the equation \ensuremath{\Varid{map}\;\Varid{g}\;(\Varid{f}\;\Varid{xs})\mathrel{=}\Varid{f}\;(\Varid{map}\;\Varid{g}\;\Varid{xs})}, for type \ensuremath{\Varid{f}\mathbin{::}(\alpha \to \Conid{Bool})\to [\mskip1.5mu \alpha \mskip1.5mu]\to [\mskip1.5mu \alpha \mskip1.5mu]} one gets the equation \ensuremath{\Varid{map}\;\Varid{g}\;(\Varid{f}\;(\Varid{p}\hsdot{\circ }{.}\Varid{g})\;\Varid{xs})\mathrel{=}\Varid{f}\;\Varid{p}\;(\Varid{map}\;\Varid{g}\;\Varid{xs})}, etc.
There is certainly some regularity present: on one side \ensuremath{\Varid{map}\;\Varid{g}} happens ``before \ensuremath{\Varid{f}}'', on the other side it happens ``after \ensuremath{\Varid{f}}''; maybe \ensuremath{\Varid{g}} needs to be brought in at some other place in one or both of the two sides as well; but the expression structure is essentially the same on both sides.
In fact, given some experience with free theorems, one is often able to guess up front what the equation for a given type of \ensuremath{\Varid{f}} will look like.
But at present, to confirm it, one is still forced to do the chore of unfolding the definitions of the relational actions, then massaging the resulting formulae to hopefully bring them into the form one was expecting.
We will change that, by using what we call here the \emph{conjuring lemma of parametricity}.
It was previously stated for a functional-logic language to simplify derivation of free theorems in that setting \citep{mssv14} (Theorem~7.8 there), but will be used for (sublanguages of) Haskell here.
To justify it, we need a brief excursion (some readers may want to largely skip) into how relational parametricity
is usually formulated.

\subsection{Usual formulation of relational parametricity}\label{sec:parametricity}

Putting aside notational variations, as well as the fact that the exact form would differ a bit depending on whether one bases one's formalisation on a denotational or on an operational semantics (typically of a polymorphic lambda-calculus with some extensions, not full Haskell), one essentially always has the following theorem (sometimes called just the \emph{fundamental lemma of logical relations}).
Some explanations, such as what $\Delta$ stands for, are given below it.
\begin{theorem}[Relational Parametricity]\label{thm:standard}
  ~
  \begin{enumerate}
  \item 
    If \ensuremath{\Varid{e}} is a closed term (containing no free term variables, but also no free type variables) of a closed type \ensuremath{\tau }, then $(e, e) \ensuremath{\in } \logrel_{\emptyset,\ensuremath{\tau }}$.
  \item 
    If \ensuremath{\Varid{e}} is a closed term (in the sense of containing no free term variables) of a type polymorphic in one type variable, say \ensuremath{\sigma } containing free type variable \ensuremath{\alpha }, then for every choice of closed types \ensuremath{\tau _{\mathrm{1}}}, \ensuremath{\tau _{\mathrm{2}}}, and \ensuremath{\mathcal{R}\in \Varid{Rel}(\tau _{\mathrm{1}},\tau _{\mathrm{2}})}, we have $(e[\tau_1/\alpha], e[\tau_2/\alpha]) \ensuremath{\in } \logrel_{[\alpha\mapsto\ensuremath{\mathcal{R}}],\ensuremath{\sigma }}$.
  \item 
    If \ensuremath{\Varid{e}} is a polymorphic term as above, of type \ensuremath{\sigma } containing free type variable \ensuremath{\alpha }, but now possibly also containing a free term variable \ensuremath{\Varid{x}} of some type \ensuremath{\sigma '} possibly containing the free type variable \ensuremath{\alpha } as well, then for every choice of closed types \ensuremath{\tau _{\mathrm{1}}}, \ensuremath{\tau _{\mathrm{2}}}, and \ensuremath{\mathcal{R}} as above, and closed terms $e_1 :: \sigma'[\tau_1/\alpha]$ and $e_2 :: \sigma'[\tau_2/\alpha]$ such that $(e_1, e_2) \ensuremath{\in } \logrel_{[\alpha\mapsto\ensuremath{\mathcal{R}}], \ensuremath{\sigma '}}$, we have $(e[\tau_1/\alpha,e_1/x], e[\tau_2/\alpha,e_2/x]) \ensuremath{\in } \logrel_{[\alpha\mapsto\ensuremath{\mathcal{R}}],\ensuremath{\sigma }}$.
  \end{enumerate}
\end{theorem}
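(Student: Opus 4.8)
The plan is to recognize that the three clauses are not independent statements but three facets of a single \emph{fundamental lemma}, and to prove them together by one structural induction. Clause~3 is really the inductive generalization that makes the induction go through, while clauses~1 and~2 fall out as the special cases in which the term environment is empty (and, for clause~1, in which the type is closed so that the relational interpretation collapses to an identity). Concretely, I would first restate the claim in a uniform \emph{closing-substitution} form: for a term \ensuremath{\Varid{e}} whose free type variable is \ensuremath{\alpha} (or none) and whose free term variables are \ensuremath{\Varid{x}_1,\ldots,\Varid{x}_n} of types \ensuremath{\sigma_1,\ldots,\sigma_n}, if two closing substitutions assign to each \ensuremath{\Varid{x}_i} a pair of closed terms related by \ensuremath{\logrel_{[\alpha\mapsto\mathcal{R}],\sigma_i}}, then the two resulting closed instantiations of \ensuremath{\Varid{e}} are related by \ensuremath{\logrel_{[\alpha\mapsto\mathcal{R}],\sigma}}. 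The stated Theorem is then read off: clauses~1 and~2 are the \ensuremath{\Varid{n}\mathrel{=}\mathrm{0}} instances (with empty and singleton type-variable assignment, respectively), and clause~3 is the \ensuremath{\Varid{n}\mathrel{=}\mathrm{1}} instance.

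The induction then runs over the typing derivation of \ensuremath{\Varid{e}}, with one case per syntactic construct. For a variable \ensuremath{\Varid{x}_i} the conclusion is immediate from the assumption on the closing substitution. For an application \ensuremath{\Varid{e}\;\Varid{e'}} I unfold the relational action at the arrow type: the inductive hypothesis puts the two instances of \ensuremath{\Varid{e}} in \ensuremath{\logrel_{[\alpha\mapsto\mathcal{R}],\sigma'\to\sigma}} and the two instances of \ensuremath{\Varid{e'}} in \ensuremath{\logrel_{[\alpha\mapsto\mathcal{R}],\sigma'}}, and the definition of \ensuremath{\mathcal{R}_{\mathrm{1}}\to\mathcal{R}_{\mathrm{2}}} delivers related results. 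For a lambda abstraction I use the same definition in the other direction: given an arbitrary related pair of arguments I extend the closing substitution by that pair and invoke the inductive hypothesis on the body. Data constructors and primitive constants are handled by checking \emph{self-relatedness} directly against the relational actions fixed in Section~\ref{sec:relations} (e.g.\ that \ensuremath{(\Conid{Nothing},\Conid{Nothing})}, and \ensuremath{(\Conid{Just}\;\Varid{a},\Conid{Just}\;\Varid{b})} for related \ensuremath{(\Varid{a},\Varid{b})}, lie in \ensuremath{\Conid{Maybe}\;\mathcal{R}}, and analogously for lists). If the calculus exposes explicit type abstraction and application, those cases additionally require a \emph{compositionality} (substitution) lemma stating that the interpretation of \ensuremath{\sigma[\tau'/\beta]} agrees with that of \ensuremath{\sigma} under the extended assignment \ensuremath{\beta\mapsto\logrel_{\ldots,\tau'}}; in the single-type-variable, closed-type setting assumed here this lemma is light, since closed types are interpreted as identity relations.

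The genuinely delicate part -- and the place where the setting-dependence lamented earlier enters -- is general recursion. For the fixpoint combinator I cannot argue purely syntactically: I must show that \ensuremath{\logrel_{[\alpha\mapsto\mathcal{R}],\sigma}} is preserved under least upper bounds of ascending chains and contains \ensuremath{(\bot,\bot)}, so that relatedness of the successive finite approximations passes to the limit. This forces me to admit only relations that are \emph{admissible} (chain-complete) and \emph{strict}, and, once \ensuremath{\Varid{seq}} is in scope, additionally \emph{total}; it is also exactly why the datatype actions of Section~\ref{sec:relations} are taken as \emph{least} fixpoints, since that is what makes the chain argument close. I expect this fixpoint and strictness bookkeeping, rather than any structural case, to be the main obstacle, and it is precisely the complication that the conjuring lemma and the dinaturality-based approach of the following sections are designed to let us sidestep when generating concrete free theorems.
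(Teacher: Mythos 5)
There is nothing in the paper to compare your argument against: the paper deliberately gives \emph{no} proof of this theorem. It is quoted as the standard \emph{fundamental lemma of logical relations}, attributed to the literature, precisely because its exact formulation and proof vary with the language setting (denotational vs.\ operational semantics, partiality, \ensuremath{\Varid{seq}}, \dots); the only statement the paper proves itself is the Conjuring Lemma, which is a corollary of this theorem, not part of its proof. Judged on its own, your sketch is the standard one and is essentially sound: you correctly recognise clause~3 as the inductive generalisation (stated for $n$ free term variables under related closing substitutions), with clauses~1 and~2 as its degenerate instances; the induction over typing derivations with the arrow action driving the application and abstraction cases, self-relatedness for constructors, and a compositionality lemma for type instantiation is exactly how the literature proceeds; and you put your finger on the genuine crux for the Haskell-like settings this paper targets, namely the fixpoint case, which is what forces admissibility and strictness (and totality once \ensuremath{\Varid{seq}} is admitted) and is why the datatype liftings are least fixpoints. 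Three small remarks. First, clause~1 does not need the observation that $\logrel_{\emptyset,\tau}$ collapses to the identity; it is literally the empty-context instance of the general statement (the identity fact is used later, in the paper's proof of the Conjuring Lemma). Second, a complete proof also owes the unglamorous lemmas that every primitive constant is related to itself and that the liftings $[\mathcal{R}]$, \ensuremath{\Conid{Maybe}\;\mathcal{R}}, etc.\ preserve strictness and admissibility -- bookkeeping you acknowledge but do not carry out. Third, your fixpoint argument is phrased domain-theoretically; in an operational formulation the same difficulty reappears in a different guise (e.g.\ step-indexing or closure under reduction), which is consistent with, and explains, the paper's remark that the theorem's exact form depends on the chosen semantics.
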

Now, the promised explanations:
\begin{itemize}
\item
  The notation $\logrel_{\rho,\ensuremath{\sigma }}$ corresponds to the construction of relations from types (as in Section~\ref{sec:relations}), where $\rho$ keeps track of the interpretation of any type variables by chosen relations.
  For example, $\logrel_{\emptyset,\ensuremath{\Conid{Int}\to [\mskip1.5mu \Conid{Bool}\mskip1.5mu]}}$ would be $\ensuremath{\Varid{id}}_{\ensuremath{\Conid{Int}}}\ensuremath{\to }[\ensuremath{\Varid{id}}_{\ensuremath{\Conid{Bool}}}]$ and $\logrel_{[\alpha\mapsto\ensuremath{\mathcal{R}}],\ensuremath{[\mskip1.5mu \alpha \mskip1.5mu]\to \alpha }}$ would be \ensuremath{[\mskip1.5mu \mathcal{R}\mskip1.5mu]\to \mathcal{R}}.

\item
  For any closed type \ensuremath{\tau }, the relation $\logrel_{\emptyset,\ensuremath{\tau }}$ (in fact, any $\logrel_{\rho,\ensuremath{\tau }}$) turns out to just be the identity relation at type \ensuremath{\tau }.
  As such, $(e, e) \ensuremath{\in } \logrel_{\emptyset,\ensuremath{\tau }}$ in the first item of the theorem may appear to state a triviality.
  However, if one explicitly handles abstraction and instantiation of type variables (we have not done so for the exposition in Section~\ref{sec:old}, because we anyway wanted to deal only with types polymorphic over exactly one type variable), then it is less so.
  One then introduces, alongside \ensuremath{\mathcal{R}_{\mathrm{1}}\to \mathcal{R}_{\mathrm{2}}} etc., a new relational action $\forall\mathcal{R}.~\mathcal{F}~\mathcal{R}$ (for mappings \ensuremath{\mathcal{F}} on relations), which is defined in exactly such a way that when moreover setting $\logrel_{\rho,\ensuremath{\forall \alpha \hsforall \hsdot{\circ }{.}\sigma }}=\forall\mathcal{R}.~\logrel_{\rho[\alpha\mapsto\ensuremath{\mathcal{R}}],\ensuremath{\sigma }}$, the statement $(e, e) \ensuremath{\in } \logrel_{\emptyset,\ensuremath{\forall \alpha \hsforall \hsdot{\circ }{.}\sigma }}$ reduces exactly to the statement in the second item of the theorem -- which then needs not to be explicitly made.
  The treatment is analogous if one has types polymorphic in more than one type variable, say \ensuremath{\tau \mathrel{=}\forall \alpha \hsforall \hsdot{\circ }{.}\forall \beta \hsforall \hsdot{\circ }{.}\sigma }, which explains how to deal with that case not considered in Section~\ref{sec:old}.

\item
  The choices of relations \ensuremath{\mathcal{R}\in \Varid{Rel}(\tau _{\mathrm{1}},\tau _{\mathrm{2}})} are not really completely arbitrary, instead depend on the language setting for which the parametricity theorem is stated and proved.
  As mentioned earlier, \ensuremath{\mathcal{R}} must be strict to take the presence of partial functions into account, and must be strict and total to take the presence of \ensuremath{\Varid{seq}} into account, and other restrictions may apply in other settings.

\item
  Even the third item of the theorem as stated above, adding the treatment of free term variables, is not yet the most general form.
  In general, the parametricity theorem is formulated for an arbitrary number of free type and term variables, in straightforward (but notationally tedious) extension of the formulations above.
  Just for the sake of exposition here, we have chosen the progression between the three items.
  Of course, usually not all three (or more/further ones) are shown, only one at the level of generality needed for a specific concern.
  In a short while, we will see that it can even be useful to consider the case where \ensuremath{\Varid{e}} does involve a type variable, and free term variables of types involving that type variable, but does itself \emph{not} have a polymorphic type.
\end{itemize}
Also, let us make explicit how Theorem~\ref{thm:standard} corresponds to the standard derivation approach for free theorems as described in Section~\ref{sec:old}.
Given a function \ensuremath{\Varid{f}} of type scheme \ensuremath{\sigma } polymorphic in \ensuremath{\alpha }, one would use the first or second item of the theorem to conclude $(f_{\tau_1}, f_{\tau_2}) \ensuremath{\in } \logrel_{[\alpha\mapsto\ensuremath{\mathcal{R}}],\ensuremath{\sigma }}$, then unfold the definition of $\logrel_{[\alpha\mapsto\ensuremath{\mathcal{R}}],\ensuremath{\sigma }}$, for example \ensuremath{(\Varid{f}_{\tau_1},\Varid{f}_{\tau_2})\in (\mathcal{R}\to \Varid{id}_\Conid{Bool})\to ([\mskip1.5mu \mathcal{R}\mskip1.5mu]\to \Conid{Maybe}\;\mathcal{R})} if \ensuremath{\sigma \mathrel{=}(\alpha \to \Conid{Bool})\to [\mskip1.5mu \alpha \mskip1.5mu]\to \Conid{Maybe}\;\alpha }, then continue from there, with all the tedious work this entails.

The trick now is to establish a lemma, actually a corollary, that does not even mention the relation construction $\Delta$, and that directly states an equality between expressions rather than something about relatedness.

\subsection{The conjuring lemma of parametricity}\label{sec:conjuring}

Before giving the lemma, let us give a brief example of the sort of term \ensuremath{\Varid{e}} that can appear in it, since without such an example it may be counterintuitive how \ensuremath{\Varid{e}} could ``involve \ensuremath{\alpha }'' but nevertheless have a closed overall type.
What this means is that \ensuremath{\Varid{e}} can be something like \ensuremath{\lambda \Varid{xs}\to \Varid{map}\;\Varid{post}\;(\Varid{f}\;(\Varid{map}\;\Varid{pre}\;\Varid{xs}))}.
In a context in which \ensuremath{\Varid{f}\mathbin{::}\forall \alpha \hsforall \hsdot{\circ }{.}[\mskip1.5mu \alpha \mskip1.5mu]\to [\mskip1.5mu \alpha \mskip1.5mu]} and \ensuremath{\Varid{pre}} and \ensuremath{\Varid{post}} are term variables typed \ensuremath{\tau _{\mathrm{1}}\to \alpha } and \ensuremath{\alpha \to \tau _{\mathrm{2}}} respectively, this \ensuremath{\Varid{e}} has the closed type \ensuremath{[\mskip1.5mu \tau _{\mathrm{1}}\mskip1.5mu]\to [\mskip1.5mu \tau _{\mathrm{2}}\mskip1.5mu]}, despite the fact that in order to write down \ensuremath{\Varid{e}} with explicit type annotations everywhere (i.e., on all subexpressions), one would also need to write down the type variable \ensuremath{\alpha } at some places.
Now the lemma, a corollary of the parametricity theorem.
\begin{lemma}[Conjuring Lemma]\label{lem:conjuringtrick:a}
~\\
Let \ensuremath{\tau }, \ensuremath{\tau _{\mathrm{1}}} and \ensuremath{\tau _{\mathrm{2}}} be closed types.
Let \ensuremath{\Varid{g}\mathbin{::}\tau _{\mathrm{1}}\to \tau _{\mathrm{2}}} be closed and:
\begin{itemize}
\item
  strict if we want to respect partially defined functions,
\item
  strict and total if we want to respect \ensuremath{\Varid{seq}}.
\end{itemize}
Let \ensuremath{\Varid{e}\mathbin{::}\tau } be a term possibly involving \ensuremath{\alpha } (but not in its own overall type, which is closed by assumption) and term variables \ensuremath{\Varid{pre}\mathbin{::}\tau _{\mathrm{1}}\to \alpha } and \ensuremath{\Varid{post}\mathbin{::}\alpha \to \tau _{\mathrm{2}}}, but no other free variables.
Then:
\[e[\tau_1/\alpha,\id_{\tau_1}/\ensuremath{\Varid{pre}},g/\ensuremath{\Varid{post}}] = e[\tau_2/\alpha,g/\ensuremath{\Varid{pre}},\id_{\tau_2}/\ensuremath{\Varid{post}}]\]
\end{lemma}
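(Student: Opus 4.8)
The plan is to read this equality off as a direct instance of the parametricity theorem (Theorem~\ref{thm:standard}), specialised to the graph of \ensuremath{\Varid{g}}. Concretely, I would set \ensuremath{\mathcal{R}\mathrel{=}\Varid{graph}(\Varid{g})\in \Varid{Rel}(\tau _{\mathrm{1}},\tau _{\mathrm{2}})} and apply the theorem to the term \ensuremath{\Varid{e}}, which carries the free type variable \ensuremath{\alpha} together with the two free term variables \ensuremath{\Varid{pre}} and \ensuremath{\Varid{post}}. This is exactly the situation foreshadowed in the last explanatory bullet before the lemma: \ensuremath{\Varid{e}} involves \ensuremath{\alpha} and has free term variables whose types involve \ensuremath{\alpha}, yet \ensuremath{\Varid{e}} itself has a closed type. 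Since the third item of the theorem extends straightforwardly to several free term variables, it yields $(e[\tau_1/\alpha,\ldots],\,e[\tau_2/\alpha,\ldots]) \in \logrel_{[\alpha\mapsto\mathcal{R}],\tau}$, provided the closed terms substituted for \ensuremath{\Varid{pre}} and \ensuremath{\Varid{post}} are related in the appropriate sense.

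First I would discharge the two relatedness obligations. The interpretation of \ensuremath{\Varid{pre}\mathbin{::}\tau _{\mathrm{1}}\to \alpha} under $[\alpha\mapsto\mathcal{R}]$ is \ensuremath{\id_{\tau_1}\to \mathcal{R}} (the closed \ensuremath{\tau _{\mathrm{1}}} contributing the identity relation), so I must verify $(\id_{\tau_1},g) \in \id_{\tau_1}\to\mathcal{R}$: for \ensuremath{\Varid{x}\mathrel{=}\Varid{y}\mathbin{::}\tau _{\mathrm{1}}} this asks $(x,g\,y)\in\Varid{graph}(g)$, i.e.\ \ensuremath{\Varid{g}\;\Varid{x}\mathrel{=}\Varid{g}\;\Varid{y}}, which is immediate. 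Dually, \ensuremath{\Varid{post}\mathbin{::}\alpha \to \tau _{\mathrm{2}}} is interpreted as \ensuremath{\mathcal{R}\to \id_{\tau_2}}, and I must verify $(g,\id_{\tau_2}) \in \mathcal{R}\to\id_{\tau_2}$: for $(a,b)\in\Varid{graph}(g)$, i.e.\ \ensuremath{\Varid{g}\;\Varid{a}\mathrel{=}\Varid{b}}, this asks \ensuremath{\Varid{g}\;\Varid{a}\mathrel{=}\Varid{b}}, again immediate. The asymmetry here — the identity landing on the \ensuremath{\tau _{\mathrm{1}}}-side for \ensuremath{\Varid{pre}} and on the \ensuremath{\tau _{\mathrm{2}}}-side for \ensuremath{\Varid{post}} — is precisely what produces the two distinct substitution instances appearing in the statement.

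The step that converts relatedness into the desired equation is that \ensuremath{\Varid{e}} has \emph{closed} type \ensuremath{\tau}. By the remark following Theorem~\ref{thm:standard}, $\logrel_{\rho,\tau}$ is, for any \ensuremath{\rho}, just the identity relation at \ensuremath{\tau}; hence the membership $(e[\tau_1/\alpha,\id_{\tau_1}/\Varid{pre},g/\Varid{post}],\,e[\tau_2/\alpha,g/\Varid{pre},\id_{\tau_2}/\Varid{post}]) \in \logrel_{[\alpha\mapsto\mathcal{R}],\tau}$ collapses to exactly the claimed equality.

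I expect the main obstacle to lie not in these relatedness calculations but in justifying that \ensuremath{\mathcal{R}\mathrel{=}\Varid{graph}(\Varid{g})} is an \emph{admissible} relation in each language setting — which is precisely what the hypotheses on \ensuremath{\Varid{g}} buy. In the basic total setting every \ensuremath{\Varid{g}} qualifies. Once partial functions are admitted, relations must be strict, and $(\bot,\bot)\in\Varid{graph}(g)$ is equivalent to \ensuremath{\Varid{g}\;\bot \mathrel{=}\bot}, i.e.\ strictness of \ensuremath{\Varid{g}}. With \ensuremath{\Varid{seq}}, relations must additionally be total, forcing \ensuremath{\Varid{g}} to preserve definedness in both directions, i.e.\ strict \emph{and} total; in that setting I would also have to check the extra definedness side-conditions attached to the \ensuremath{\to}-action (that \ensuremath{\id_{\tau_1}}, \ensuremath{\id_{\tau_2}}, and \ensuremath{\Varid{g}} are each non-\ensuremath{\bot}), the last of which follows from totality exactly as noted in the footnote.
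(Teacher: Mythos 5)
Your proposal is correct and follows essentially the same route as the paper's own proof: instantiate $\mathcal{R}$ as \ensuremath{\Varid{graph}(\Varid{g})} (admissible thanks to the strictness/totality hypotheses on \ensuremath{\Varid{g}}), verify $(\id_{\tau_1}, g) \ensuremath{\in } \id_{\tau_1} \ensuremath{\to \Varid{graph}(\Varid{g})}$ and $(g, \id_{\tau_2}) \ensuremath{\in \Varid{graph}(\Varid{g})\to } \id_{\tau_2}$, apply the parametricity theorem in its multi-variable form, and collapse $\logrel_{[\alpha\mapsto\ensuremath{\mathcal{R}}],\ensuremath{\tau }}$ to $\id_\tau$ since \ensuremath{\tau } is closed. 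You even spell out the ``easy calculations'' and the \ensuremath{\Varid{seq}}-setting definedness side-conditions that the paper only gestures at, so nothing is missing.
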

\begin{proof}
  The conditions on \ensuremath{\Varid{g}} (strictness/totality depending on language setting) guarantee that its graph can be used as an admissible \ensuremath{\mathcal{R}}.
  To apply the parametricity theorem (in its general form with arbitrarily many free variables), we need to establish $(\id_{\tau_1}, g) \ensuremath{\in } \logrel_{[\alpha\mapsto\ensuremath{\Varid{graph}(\Varid{g})}], \ensuremath{\tau _{\mathrm{1}}\to \alpha }}$ and $(g, \id_{\tau_2}) \ensuremath{\in } \logrel_{[\alpha\mapsto\ensuremath{\Varid{graph}(\Varid{g})}], \ensuremath{\alpha \to \tau _{\mathrm{2}}}}$.
  Since \ensuremath{\tau _{\mathrm{1}},\tau _{\mathrm{2}}} are closed types, these statements reduce to $(\id_{\tau_1}, g) \ensuremath{\in } \id_{\tau_1} \ensuremath{\to \Varid{graph}(\Varid{g})}$ and $(g, \id_{\tau_2}) \ensuremath{\in \Varid{graph}(\Varid{g})\to } \id_{\tau_2}$, respectively.
  Both of these hold in all the language settings considered (easy calculations; also note that \ensuremath{\Varid{g}\neq\bot } if \ensuremath{\Varid{g}} total), so the parametricity theorem lets us conclude 
  \[(e[\tau_1/\alpha,\id_{\tau_1}/\ensuremath{\Varid{pre}},g/\ensuremath{\Varid{post}}],e[\tau_2/\alpha,g/\ensuremath{\Varid{pre}},\id_{\tau_2}/\ensuremath{\Varid{post}}]) \ensuremath{\in } \logrel_{[\alpha\mapsto\ensuremath{\mathcal{R}}],\ensuremath{\tau }}\]
  from which the lemma's statement follows by $\logrel_{[\alpha\mapsto\ensuremath{\mathcal{R}}],\ensuremath{\tau }} = \id_\tau$ (recall: $\tau$ is closed).
\end{proof}
Let us reflect on what we have gained.
The conjuring lemma does not mention $\Delta$ from the previous subsection.
It holds in basically any language setting in which the (or better, a) parametricity theorem holds, no matter what the exact definitions of the relational actions (the unfolding steps employed for a concrete $\Delta$) are.
It is enough that a) the \emph{statement} of the parametricity theorem holds in the language setting under consideration, and that b) $(\id_{\tau_1}, g) \ensuremath{\in } \id_{\tau_1} \ensuremath{\to \Varid{graph}(\Varid{g})}$ and $(g, \id_{\tau_2}) \ensuremath{\in \Varid{graph}(\Varid{g})\to } \id_{\tau_2}$ do hold.
Both a) and b) are the case in all languages and $\Delta$-definitions we are aware of.
This does not just mean partiality and \ensuremath{\Varid{seq}} in Haskell, but also for example the setting with imprecise error semantics as studied in \citep{sv09}.
Even in work on parametricity and free theorems for a functional-logic language \citep{mssv14}, where the definition of $\Delta$, including the case \ensuremath{\mathcal{R}_{\mathrm{1}}\to \mathcal{R}_{\mathrm{2}}}, turns out somewhat differently (since having to deal with nondeterminism and thus with power domain types), the statement of the parametricity theorem and the definition of \ensuremath{\mathcal{R}_{\mathrm{1}}\to \mathcal{R}_{\mathrm{2}}} are such that the conjuring lemma holds.
Of course, whether the \ensuremath{\Varid{g}} in the lemma must be strict, or strict and total, or something else, does depend on the language setting, but this is not harmful, since it does not restrict us in our choice of \ensuremath{\Varid{e}}. 

Also, suppose the situation that some new datatype is to be considered.
Usually, this requires some new lifting to be defined and used for the relational interpretation of types.
Even though there is a standard recipe to follow, at least for run-of-the-mill algebraic datatypes, it is still work, and requires checking and of course building into a free theorems generator, along with appropriate simplification rules.
Not so if we use the conjuring lemma, which (while of course requiring an assertion that the parametricity theorem still holds even in the presence of the new datatype -- i.e., there must \emph{exist} an appropriate relational lifting) is not itself sensitive at all to how the new datatype is relationally interpreted.
If we can come up with interesting terms \ensuremath{\Varid{e}}, now possibly involving the new datatype, we are in good condition to prove new free theorems.

Before we consider the question whether we actually can, in general, come up with interesting terms \ensuremath{\Varid{e}}, let us do so for some specific examples.
We have already remarked, just before the conjuring lemma, that given \ensuremath{\Varid{f}\mathbin{::}\forall \alpha \hsforall \hsdot{\circ }{.}[\mskip1.5mu \alpha \mskip1.5mu]\to [\mskip1.5mu \alpha \mskip1.5mu]}, the term \ensuremath{\Varid{e}\mathrel{=}\lambda \Varid{xs}\to \Varid{map}\;\Varid{post}\;(\Varid{f}\;(\Varid{map}\;\Varid{pre}\;\Varid{xs}))} fits the bill, which means that the conjuring lemma gives us the following statement:
\begingroup\par\vskip\abovedisplayskip\noindent\advance\leftskip\mathindent\(
\begin{pboxed}\SaveRestoreHook
\column{B}{@{}>{\hspre}l<{\hspost}@{}}%
\column{E}{@{}>{\hspre}l<{\hspost}@{}}%
\>[B]{}~~~~~(\lambda \Varid{xs}\to \Varid{map}\;\Varid{g}\;(\Varid{f}\;(\Varid{map}\;\Varid{id}\;\Varid{xs})))\mathrel{=}(\lambda \Varid{xs}\to \Varid{map}\;\Varid{id}\;(\Varid{f}\;(\Varid{map}\;\Varid{g}\;\Varid{xs}))){}\<[E]%
\ColumnHook
\end{pboxed}
\)\par\vskip\belowdisplayskip\noindent\endgroup\resethooks
Using the additional knowledge that \ensuremath{\Varid{map}\;\Varid{id}\mathrel{=}\Varid{id}}, this is exactly the standard free theorem for said type of \ensuremath{\Varid{f}}, namely \ensuremath{\Varid{map}\;\Varid{g}\hsdot{\circ }{.}\Varid{f}\mathrel{=}\Varid{f}\hsdot{\circ }{.}\Varid{map}\;\Varid{g}}.

Let us try again, for the type \ensuremath{\Varid{f}\mathbin{::}\forall \alpha \hsforall \hsdot{\circ }{.}(\alpha \to \Conid{Bool})\to [\mskip1.5mu \alpha \mskip1.5mu]\to [\mskip1.5mu \alpha \mskip1.5mu]}.
We may ``know'' that we want \ensuremath{\Varid{map}\;\Varid{g}\;(\Varid{f}\;(\Varid{p}\hsdot{\circ }{.}\Varid{g})\;\Varid{xs})\mathrel{=}\Varid{f}\;\Varid{p}\;(\Varid{map}\;\Varid{g}\;\Varid{xs})}, but do not want to prove that statement with a lengthy derivation.
Imagining where \ensuremath{\Varid{pre}} and \ensuremath{\Varid{post}} should be put in order to make both sides of the desired statement an instance of a common term \ensuremath{\Varid{e}}, we may arrive at \ensuremath{\Varid{e}\mathrel{=}\lambda \Varid{p}\;\Varid{xs}\to \Varid{map}\;\Varid{post}\;(\Varid{f}\;(\Varid{p}\hsdot{\circ }{.}\Varid{post})\;(\Varid{map}\;\Varid{pre}\;\Varid{xs}))}, from which the conjuring lemma plus \ensuremath{\Varid{map}\;\Varid{id}\mathrel{=}\Varid{id}} rewriting gives us
\begingroup\par\vskip\abovedisplayskip\noindent\advance\leftskip\mathindent\(
\begin{pboxed}\SaveRestoreHook
\column{B}{@{}>{\hspre}l<{\hspost}@{}}%
\column{E}{@{}>{\hspre}l<{\hspost}@{}}%
\>[B]{}~~~~~(\lambda \Varid{p}\;\Varid{xs}\to \Varid{map}\;\Varid{g}\;(\Varid{f}\;(\Varid{p}\hsdot{\circ }{.}\Varid{g})\;\Varid{xs}))\mathrel{=}(\lambda \Varid{p}\;\Varid{xs}\to \Varid{f}\;(\Varid{p}\hsdot{\circ }{.}\Varid{id})\;(\Varid{map}\;\Varid{g}\;\Varid{xs})){}\<[E]%
\ColumnHook
\end{pboxed}
\)\par\vskip\belowdisplayskip\noindent\endgroup\resethooks
Should we also have rewritten \ensuremath{\Varid{p}\hsdot{\circ }{.}\Varid{id}} to \ensuremath{\Varid{p}}?
No, not in general!
In fact, \ensuremath{\Varid{p}\hsdot{\circ }{.}\Varid{id}\mathrel{=}\Varid{p}} is not valid in the presence of \ensuremath{\Varid{seq}}, and luckily there is no way to abuse the conjuring lemma for producing the not generally valid statement \ensuremath{\Varid{map}\;\Varid{g}\;(\Varid{f}\;(\Varid{p}\hsdot{\circ }{.}\Varid{g})\;\Varid{xs})\mathrel{=}\Varid{f}\;\Varid{p}\;(\Varid{map}\;\Varid{g}\;\Varid{xs})}.
Only after applying the lemma, when we commit to a specific language setting, we may decide that for us \ensuremath{\Varid{p}\hsdot{\circ }{.}\Varid{id}\mathrel{=}\Varid{p}} indeed holds.

To conclude this example exploration, let us consider the nasty type \ensuremath{\Varid{f}\mathbin{::}(((([\mskip1.5mu \alpha \mskip1.5mu]\to \Conid{Int})\to \Conid{Int})\to \Conid{Int})\to \alpha )\to \alpha } from Section~\ref{sec:problems}.
The choice \ensuremath{\Varid{e}\mathrel{=}\lambda \Varid{p}\to \Varid{post}\;(\Varid{f}\;(\lambda \Varid{s}\to \Varid{pre}\;(\Varid{p}\;(\lambda \Varid{t}\to \Varid{s}\;(\lambda \Varid{w}\to \Varid{t}\;(\lambda \Varid{x}\to \Varid{w}\;(\Varid{map}\;\Varid{pre}\;\Varid{x})))))))} gives us what we reported there as~(\ref{eq:2}).
We also remarked there that the approach presented here does not give us the various positive corner cases relevant in the presence of \ensuremath{\Varid{seq}}.
That is not fully true; actually the conjuring lemma gives us those as well, for example with \ensuremath{\Varid{e}\mathrel{=}\lambda \Varid{p}\to \Varid{post}\;(\Varid{f}\;(\lambda \Varid{s}\to \Varid{pre}\;(\Varid{p}\;(\lambda \Varid{t}\to \Varid{s}\;\bot ))))}, which is a valid input to Lemma~\ref{lem:conjuringtrick:a}, and gives us
\ensuremath{\Varid{g}\;(\Varid{f}\;(\lambda \Varid{s}\to \Varid{p}\;(\lambda \Varid{t}\to \Varid{s}\;\bot )))\mathrel{=}\Varid{f}\;(\lambda \Varid{s}\to \Varid{g}\;(\Varid{p}\;(\lambda \Varid{t}\to \Varid{s}\;\bot )))}.
But in what follows, we want to construct exactly one \ensuremath{\Varid{e}} for each type of \ensuremath{\Varid{f}}, and of course we opt for the supposedly most useful one, not for corner cases that ``just'' happen to also be valid.
So for said type of \ensuremath{\Varid{f}}, we want to, and will, construct the \ensuremath{\Varid{e}} which gives
\begingroup\par\vskip\abovedisplayskip\noindent\advance\leftskip\mathindent\(
\begin{pboxed}\SaveRestoreHook
\column{B}{@{}>{\hspre}l<{\hspost}@{}}%
\column{E}{@{}>{\hspre}l<{\hspost}@{}}%
\>[B]{}~~~~~\Varid{g}\;(\Varid{f}\;(\lambda \Varid{s}\to \Varid{p}\;(\lambda \Varid{t}\to \Varid{s}\;(\lambda \Varid{w}\to \Varid{t}\;(\lambda \Varid{x}\to \Varid{w}\;\Varid{x}))))){}\<[E]%
\\
\>[B]{}~~~~~~~\mathrel{=}{}\<[E]%
\\
\>[B]{}~~~~~\Varid{f}\;(\lambda \Varid{s}\to \Varid{g}\;(\Varid{p}\;(\lambda \Varid{t}\to \Varid{s}\;(\lambda \Varid{w}\to \Varid{t}\;(\lambda \Varid{x}\to \Varid{w}\;(\Varid{map}\;\Varid{g}\;\Varid{x})))))){}\<[E]%
\ColumnHook
\end{pboxed}
\)\par\vskip\belowdisplayskip\noindent\endgroup\resethooks
(or with left-hand side \ensuremath{\Varid{g}\;(\Varid{f}\;\Varid{p})}
in a world in which eta-reduction is valid).

\subsection{Constructing \ensuremath{\Varid{e}} -- discovering dinaturality}\label{sec:dinaturality}

Given some \ensuremath{\Varid{f}} of polymorphic type, we want to construct an \ensuremath{\Varid{e}} of closed type.
That seems easy, we could simply use \ensuremath{\Varid{e}\mathrel{=}\mathrm{42}}.
But no, of course we want \ensuremath{\Varid{e}} to use \ensuremath{\Varid{f}} in an interesting way.
In essence, we want it to ``touch'' each occurrence of the type variable \ensuremath{\alpha } in the type of \ensuremath{\Varid{f}}.
For doing so, \ensuremath{\Varid{e}} can use \ensuremath{\Varid{pre}\mathbin{::}\tau _{\mathrm{1}}\to \alpha } and \ensuremath{\Varid{post}\mathbin{::}\alpha \to \tau _{\mathrm{2}}}.
Some reflection shows that we should make a difference between positive and negative occurrences of \ensuremath{\alpha }, in the standard sense of polarity in function types.
That is, an occurrence of \ensuremath{\alpha } that is reached by an odd number of left-branching at function arrows (in the standard right-associative reading of \ensuremath{\to }) is considered a negative occurrence, others are considered positive occurrences.
So, for example, in the type \ensuremath{(\alpha \to \Conid{Bool})\to [\mskip1.5mu \alpha \mskip1.5mu]\to \Conid{Maybe}\;\alpha }, the first \ensuremath{\alpha } is positive, the second one is negative, and the third one is positive.
Then, we want to construct \ensuremath{\Varid{e}} such that negative occurrences of \ensuremath{\alpha } are replaced by \ensuremath{\tau _{\mathrm{1}}} and positive ones by \ensuremath{\tau _{\mathrm{2}}}.

This is doable by structural recursion on type expressions.
Specifically, the following function \ensuremath{\Varid{mono}_{\Varid{pre},\Varid{post}}(\sigma )} builds a term that maps an input of type \ensuremath{\sigma } to an output of a type with the same structure as \ensuremath{\sigma }, but made monomorphic according to the just described rule about negative and positive occurrences of \ensuremath{\alpha }.
So, for example, \ensuremath{\Varid{mono}_{\Varid{pre},\Varid{post}}((\alpha \to \Conid{Bool})\to [\mskip1.5mu \alpha \mskip1.5mu]\to \Conid{Maybe}\;\alpha )} maps an input of type \ensuremath{(\alpha \to \Conid{Bool})\to [\mskip1.5mu \alpha \mskip1.5mu]\to \Conid{Maybe}\;\alpha } to an output of type \ensuremath{(\tau _{\mathrm{2}}\to \Conid{Bool})\to [\mskip1.5mu \tau _{\mathrm{1}}\mskip1.5mu]\to \Conid{Maybe}\;\tau _{\mathrm{2}}}.
We do not prove the general behaviour, but it should be easy to see that \ensuremath{\Varid{mono}_{\Varid{pre},\Varid{post}}(\sigma )} does what we claim.
The defining equations we give should also be suggestive of what would have to be done if new datatypes (other than lists and \ensuremath{\Conid{Maybe}}) are introduced.
\begingroup\par\vskip\abovedisplayskip\noindent\advance\leftskip\mathindent\(
\begin{pboxed}\SaveRestoreHook
\column{B}{@{}>{\hspre}l<{\hspost}@{}}%
\column{3}{@{}>{\hspre}l<{\hspost}@{}}%
\column{40}{@{}>{\hspre}l<{\hspost}@{}}%
\column{E}{@{}>{\hspre}l<{\hspost}@{}}%
\>[3]{}\Varid{mono}_{\Varid{pre},\Varid{post}}(\alpha ){}\<[40]%
\>[40]{}\mathrel{=}\Varid{post}{}\<[E]%
\\
\>[3]{}\Varid{mono}_{\Varid{pre},\Varid{post}}(\Conid{Bool}){}\<[40]%
\>[40]{}\mathrel{=}\Varid{id}{}\<[E]%
\\
\>[3]{}\Varid{mono}_{\Varid{pre},\Varid{post}}(\Conid{Int}){}\<[40]%
\>[40]{}\mathrel{=}\Varid{id}{}\<[E]%
\\
\>[3]{}\Varid{mono}_{\Varid{pre},\Varid{post}}([\mskip1.5mu \sigma \mskip1.5mu]){}\<[40]%
\>[40]{}\mathrel{=}\Varid{map}\;\Varid{mono}_{\Varid{pre},\Varid{post}}(\sigma ){}\<[E]%
\\
\>[3]{}\Varid{mono}_{\Varid{pre},\Varid{post}}(\Conid{Maybe}\;\sigma ){}\<[40]%
\>[40]{}\mathrel{=}\Varid{fmap}\;\Varid{mono}_{\Varid{pre},\Varid{post}}(\sigma ){}\<[E]%
\\
\>[3]{}\Varid{mono}_{\Varid{pre},\Varid{post}}(\sigma _{\mathrm{1}}\to \sigma _{\mathrm{2}}){}\<[40]%
\>[40]{}\mathrel{=}\lambda \Varid{h}\to \Varid{mono}_{\Varid{pre},\Varid{post}}(\sigma _{\mathrm{2}})\hsdot{\circ }{.}\Varid{h}\hsdot{\circ }{.}\Varid{mono}_{\Varid{post},\Varid{pre}}(\sigma _{\mathrm{1}}){}\<[E]%
\ColumnHook
\end{pboxed}
\)\par\vskip\belowdisplayskip\noindent\endgroup\resethooks
Note the switching of \ensuremath{\Varid{pre}} and \ensuremath{\Varid{post}} in moving from \ensuremath{\sigma _{\mathrm{1}}\to \sigma _{\mathrm{2}}} to \ensuremath{\sigma _{\mathrm{1}}}.
Of course, in that last defining equation, the \ensuremath{\Varid{h}} must be a sufficiently fresh variable (also relative to \ensuremath{\Varid{pre}} and \ensuremath{\Varid{post}}).

Given \ensuremath{\Varid{f}} of polymorphic type \ensuremath{\forall \alpha \hsforall \hsdot{\circ }{.}\sigma }, we will be able to use the term \ensuremath{\Varid{e}\mathrel{=}\Varid{mono}_{\Varid{pre},\Varid{post}}(\sigma )\;\Varid{f}} in Lemma~\ref{lem:conjuringtrick:a}.
It is useful to notice then that (omitting explicit type instantiation and substitution):
\begin{itemize}
\item
  $\ensuremath{\Varid{mono}_{\Varid{pre},\Varid{post}}(\sigma )}[\id/\ensuremath{\Varid{pre}},g/\ensuremath{\Varid{post}}]=\ensuremath{\Varid{mono}_{\Varid{id},\Varid{g}}(\sigma )}$
\item
  $\ensuremath{\Varid{mono}_{\Varid{pre},\Varid{post}}(\sigma )}[g/\ensuremath{\Varid{pre}},\id/\ensuremath{\Varid{post}}]=\ensuremath{\Varid{mono}_{\Varid{g},\Varid{id}}(\sigma )}$
\end{itemize}
So our overall procedure now is to generate free theorems as follows:
\[\ensuremath{\Varid{mono}_{\Varid{id},\Varid{g}}(\sigma )\;\Varid{f}\mathrel{=}\Varid{mono}_{\Varid{g},\Varid{id}}(\sigma )\;\Varid{f}}\]
Category theory aficionados will recognise the concept of dinaturality here!

Let us try out the above for the example \ensuremath{\Varid{f}\mathbin{::}(\alpha \to \Conid{Bool})\to [\mskip1.5mu \alpha \mskip1.5mu]\to \Conid{Maybe}\;\alpha }.
We get:
\begingroup\par\vskip\abovedisplayskip\noindent\advance\leftskip\mathindent\(
\begin{pboxed}\SaveRestoreHook
\column{B}{@{}>{\hspre}c<{\hspost}@{}}%
\column{BE}{@{}l@{}}%
\column{4}{@{}>{\hspre}l<{\hspost}@{}}%
\column{12}{@{}>{\hspre}l<{\hspost}@{}}%
\column{20}{@{}>{\hspre}l<{\hspost}@{}}%
\column{67}{@{}>{\hspre}l<{\hspost}@{}}%
\column{E}{@{}>{\hspre}l<{\hspost}@{}}%
\>[4]{}\Varid{mono}_{\Varid{pre},\Varid{post}}((\alpha \to \Conid{Bool})\to [\mskip1.5mu \alpha \mskip1.5mu]\to \Conid{Maybe}\;\alpha ){}\<[E]%
\\
\>[B]{}\mathrel{=}{}\<[BE]%
\>[4]{}\lambda \Varid{h}_{1}\to \Varid{mono}_{\Varid{pre},\Varid{post}}([\mskip1.5mu \alpha \mskip1.5mu]\to \Conid{Maybe}\;\alpha )\hsdot{\circ }{.}\Varid{h}_{1}\hsdot{\circ }{.}\Varid{mono}_{\Varid{post},\Varid{pre}}(\alpha \to \Conid{Bool}){}\<[E]%
\\
\>[B]{}\mathrel{=}{}\<[BE]%
\>[4]{}\lambda \Varid{h}_{1}\to {}\<[12]%
\>[12]{}(\lambda \Varid{h}_{2}\to \Varid{mono}_{\Varid{pre},\Varid{post}}(\Conid{Maybe}\;\alpha )\hsdot{\circ }{.}\Varid{h}_{2}\hsdot{\circ }{.}\Varid{mono}_{\Varid{post},\Varid{pre}}([\mskip1.5mu \alpha \mskip1.5mu])){}\<[E]%
\\
\>[12]{}\hsdot{\circ }{.}\Varid{h}_{1}\hsdot{\circ }{.}{}\<[E]%
\\
\>[12]{}(\lambda \Varid{h}_{3}\to \Varid{mono}_{\Varid{post},\Varid{pre}}(\Conid{Bool})\hsdot{\circ }{.}\Varid{h}_{3}\hsdot{\circ }{.}\Varid{mono}_{\Varid{pre},\Varid{post}}(\alpha )){}\<[E]%
\\
\>[B]{}\mathrel{=}{}\<[BE]%
\>[4]{}\lambda \Varid{h}_{1}\to (\lambda \Varid{h}_{2}\to {}\<[20]%
\>[20]{}\Varid{fmap}\;\Varid{post}\hsdot{\circ }{.}\Varid{h}_{2}\hsdot{\circ }{.}\Varid{map}\;\Varid{pre})\hsdot{\circ }{.}\Varid{h}_{1}\hsdot{\circ }{.}(\lambda \Varid{h}_{3}\to {}\<[67]%
\>[67]{}\Varid{id}\hsdot{\circ }{.}\Varid{h}_{3}\hsdot{\circ }{.}\Varid{post}){}\<[E]%
\ColumnHook
\end{pboxed}
\)\par\vskip\belowdisplayskip\noindent\endgroup\resethooks
So the free theorem we get from this by instantiation is:
\begingroup\par\vskip\abovedisplayskip\noindent\advance\leftskip\mathindent\(
\begin{pboxed}\SaveRestoreHook
\column{B}{@{}>{\hspre}l<{\hspost}@{}}%
\column{20}{@{}>{\hspre}l<{\hspost}@{}}%
\column{62}{@{}>{\hspre}l<{\hspost}@{}}%
\column{E}{@{}>{\hspre}l<{\hspost}@{}}%
\>[B]{}~~~~~(\lambda \Varid{h}_{2}\to {}\<[20]%
\>[20]{}\Varid{fmap}\;\Varid{g}\hsdot{\circ }{.}\Varid{h}_{2}\hsdot{\circ }{.}\Varid{map}\;\Varid{id})\hsdot{\circ }{.}\Varid{f}\hsdot{\circ }{.}(\lambda \Varid{h}_{3}\to {}\<[62]%
\>[62]{}\Varid{id}\hsdot{\circ }{.}\Varid{h}_{3}\hsdot{\circ }{.}\Varid{g}){}\<[E]%
\\
\>[B]{}~~~~~~~\mathrel{=}{}\<[E]%
\\
\>[B]{}~~~~~(\lambda \Varid{h}_{2}\to {}\<[20]%
\>[20]{}\Varid{fmap}\;\Varid{id}\hsdot{\circ }{.}\Varid{h}_{2}\hsdot{\circ }{.}\Varid{map}\;\Varid{g})\hsdot{\circ }{.}\Varid{f}\hsdot{\circ }{.}(\lambda \Varid{h}_{3}\to {}\<[62]%
\>[62]{}\Varid{id}\hsdot{\circ }{.}\Varid{h}_{3}\hsdot{\circ }{.}\Varid{id}){}\<[E]%
\ColumnHook
\end{pboxed}
\)\par\vskip\belowdisplayskip\noindent\endgroup\resethooks
There are ample opportunities for further simplification here, but let us try to be systematic about this.

\subsection{Simplifying obtained statements}\label{sec:simpl}

The terms generated by \ensuremath{\Varid{mono}_{\Varid{pre},\Varid{post}}(\sigma )} contain a lot of function compositions, both outside and inside of \ensuremath{\Varid{map}}- and \ensuremath{\Varid{fmap}}-calls.
Moreover, many of the partners in those compositions will be \ensuremath{\Varid{id}}, either up front because of the cases with \ensuremath{\sigma } a base type, or later when \ensuremath{\Varid{pre}} or \ensuremath{\Varid{post}} is replaced by \ensuremath{\Varid{id}} (and the other by \ensuremath{\Varid{g}}).
So our primary strategy for simplification is to inline all the compositions, and while doing so eliminate all \ensuremath{\Varid{id}}-calls.
Additionally, all lambda-abstractions introduced by the \ensuremath{\Varid{mono}_{\Varid{pre},\Varid{post}}(\sigma _{\mathrm{1}}\to \sigma _{\mathrm{2}})} case will be provided with an argument and then beta-reduced.
There is no danger of term duplication here since the lambda-bound \ensuremath{\Varid{h}} is used only linearly in the right-hand side.

These considerations lead to the following syntactic simplification rules, to be applied to terms produced as \ensuremath{\Varid{mono}_{\Varid{id},\Varid{g}}(\sigma )\;\Varid{f}} or \ensuremath{\Varid{mono}_{\Varid{g},\Varid{id}}(\sigma )\;\Varid{f}}.
As usual, where new lambda-bound variables are introduced, they are assumed to be sufficiently fresh.
\begingroup\par\vskip\abovedisplayskip\noindent\advance\leftskip\mathindent\(
\begin{pboxed}\SaveRestoreHook
\column{B}{@{}>{\hspre}l<{\hspost}@{}}%
\column{3}{@{}>{\hspre}l<{\hspost}@{}}%
\column{28}{@{}>{\hspre}l<{\hspost}@{}}%
\column{E}{@{}>{\hspre}l<{\hspost}@{}}%
\>[3]{}\lfloor\Varid{id}~\Varid{t}\rfloor{}\<[28]%
\>[28]{}\mathrel{=}\Varid{t}{}\<[E]%
\\
\>[3]{}\lfloor\Varid{map}\;\Varid{f}~\Varid{t}\rfloor{}\<[28]%
\>[28]{}\mathrel{=}\Varid{map}\;(\lambda \Varid{v}\to \lfloor\Varid{f}~\Varid{v}\rfloor)\;\Varid{t}{}\<[E]%
\\
\>[3]{}\lfloor\Varid{fmap}\;\Varid{f}~\Varid{t}\rfloor{}\<[28]%
\>[28]{}\mathrel{=}\Varid{fmap}\;(\lambda \Varid{v}\to \lfloor\Varid{f}~\Varid{v}\rfloor)\;\Varid{t}{}\<[E]%
\\
\>[3]{}\lfloor(\lambda \Varid{h}\to \Varid{body})~\Varid{t}\rfloor{}\<[28]%
\>[28]{}\mathrel{=}\lambda \Varid{v}\to \lfloor\Varid{body}[t/h]~\Varid{v}\rfloor{}\<[E]%
\\
\>[3]{}\lfloor(\Varid{f}\hsdot{\circ }{.}\Varid{g})~\Varid{t}\rfloor{}\<[28]%
\>[28]{}\mathrel{=}\lfloor\Varid{f}~\lfloor\Varid{g}~\Varid{t}\rfloor\rfloor{}\<[E]%
\\
\>[3]{}\lfloor\Varid{f}~\Varid{t}\rfloor{}\<[28]%
\>[28]{}\mathrel{=}\Varid{f}\;\Varid{t}{}\<[E]%
\ColumnHook
\end{pboxed}
\)\par\vskip\belowdisplayskip\noindent\endgroup\resethooks
The last line is a catch-all case that is only used if none of the others apply.
In the case where the simplification function \ensuremath{\lfloor\cdot\rfloor} is applied to a term of the form \ensuremath{(\lambda \Varid{h}\to \Varid{body})\;\Varid{t}}, note that we are indeed entitled to eta-expand the beta-reduced version \ensuremath{\Varid{body}[t/h]} into \ensuremath{\lambda \Varid{v}\to \Varid{body}[t/h]\;\Varid{v}} (in order to subsequently apply \ensuremath{\lfloor\cdot\rfloor} recursively).
Said eta-expansion is type correct as well as semantically correct, since by analysing the \ensuremath{\Varid{mono}_{\Varid{pre},\Varid{post}}(\sigma )} function, which is the producer of the subexpression \ensuremath{(\lambda \Varid{h}\to \Varid{body})}, we know that \ensuremath{\Varid{body}}, and hence also \ensuremath{\Varid{body}[t/h]}, is a term formed by function composition, and since \ensuremath{\Varid{f}\hsdot{\circ }{.}\Varid{g}\mathrel{=}\lambda \Varid{v}\to (\Varid{f}\hsdot{\circ }{.}\Varid{g})\;\Varid{v}} is a valid equivalence even in language settings in which eta-reduction is not valid (and in which thus \ensuremath{\Varid{f}\mathrel{=}\lambda \Varid{v}\to \Varid{f}\;\Varid{v}} would not be in general okay).
The eta-expansions on the function arguments of \ensuremath{\Varid{map}}- and \ensuremath{\Varid{fmap}}-calls (again done to enable further simplification on \ensuremath{\Varid{f}\;\Varid{v}}) are also justified, since \ensuremath{\Varid{map}} and \ensuremath{\Varid{fmap}} use their function arguments only in specific, known ways:
\ensuremath{\Varid{map}\;\Varid{f}\;\Varid{t}} is indeed semantically equivalent to \ensuremath{\Varid{map}\;(\lambda \Varid{v}\to \Varid{f}\;\Varid{v})\;\Varid{t}}, since \ensuremath{\Varid{map}} does not use \ensuremath{\Varid{seq}}.
These considerations should convince us that \ensuremath{\lfloor\cdot\rfloor} transforms a term into a semantically equivalent one, hence is correct.
But is it also exhaustive, or can we accidentally skip transforming (and thus, simplifying) some part of the term produced by \ensuremath{\Varid{mono}_{\Varid{pre},\Varid{post}}(\sigma )}?
The best argument that we cannot, actually comes from the Haskell implementation given in the appendix, and will be discussed in Section~\ref{sec:impl}.

Let us be a bit more concrete again, and consider an example.
In the previous subsection, we generated \ensuremath{\Varid{mono}_{\Varid{pre},\Varid{post}}(\sigma )} for \ensuremath{\sigma \mathrel{=}(\alpha \to \Conid{Bool})\to [\mskip1.5mu \alpha \mskip1.5mu]\to \Conid{Maybe}\;\alpha }.
Let us now calculate \ensuremath{\lfloor\Varid{mono}_{\Varid{id},\Varid{g}}(\sigma )~\Varid{f}\rfloor} from this (of course, \ensuremath{\lfloor\Varid{mono}_{\Varid{g},\Varid{id}}(\sigma )~\Varid{f}\rfloor} would be very similar).
See Fig.~\ref{fig:simpl}.
The result is not yet fully satisfactory.
For one thing, \ensuremath{\lfloor\Varid{map}\;\Varid{id}~\Varid{v}_{2}\rfloor} was ``simplified'' to \ensuremath{\Varid{map}\;(\lambda \Varid{v}_{4}\to \Varid{v}_{4})\;\Varid{v}_{2}}.
Of course, we would prefer it to be simplified to just \ensuremath{\Varid{v}_{2}}.
This is easy to achieve by adding simplification rules like \ensuremath{\lfloor\Varid{map}\;\Varid{id}~\Varid{t}\rfloor\mathrel{=}\Varid{t}}.
In fact, our implementation does something more general, namely replacing the original first simplification rule \ensuremath{\lfloor\Varid{id}~\Varid{t}\rfloor\mathrel{=}\Varid{t}} by the following one: \ensuremath{\lfloor\Varid{f}~\Varid{t}\rfloor\mathrel{=}\Varid{t}} whenever \ensuremath{\Varid{f}} can be syntactically generated by the grammar \ensuremath{\Conid{Id}\mathrel{=}\Varid{id}\mid \Varid{map}\;\Conid{Id}\mid \Varid{fmap}\;\Conid{Id}\mid \Conid{Id}\hsdot{\circ }{.}\Conid{Id}}.
\begin{figure*}
$
\begin{array}{@{}l@{\;}l@{}}
  & \ensuremath{\lfloor(\lambda \Varid{h}_{1}\to (\lambda \Varid{h}_{2}\to \Varid{fmap}\;\Varid{g}\hsdot{\circ }{.}\Varid{h}_{2}\hsdot{\circ }{.}\Varid{map}\;\Varid{id})\hsdot{\circ }{.}\Varid{h}_{1}\hsdot{\circ }{.}(\lambda \Varid{h}_{3}\to \Varid{id}\hsdot{\circ }{.}\Varid{h}_{3}\hsdot{\circ }{.}\Varid{g}))~\Varid{f}\rfloor}\\
= & \ensuremath{\lambda \Varid{v}_{1}\to \lfloor((\lambda \Varid{h}_{2}\to \Varid{fmap}\;\Varid{g}\hsdot{\circ }{.}\Varid{h}_{2}\hsdot{\circ }{.}\Varid{map}\;\Varid{id})\hsdot{\circ }{.}\Varid{f}\hsdot{\circ }{.}(\lambda \Varid{h}_{3}\to \Varid{id}\hsdot{\circ }{.}\Varid{h}_{3}\hsdot{\circ }{.}\Varid{g}))~\Varid{v}_{1}\rfloor}\\
= & \ensuremath{\lambda \Varid{v}_{1}\to \lfloor(\lambda \Varid{h}_{2}\to \Varid{fmap}\;\Varid{g}\hsdot{\circ }{.}\Varid{h}_{2}\hsdot{\circ }{.}\Varid{map}\;\Varid{id})~\lfloor\Varid{f}~\lfloor(\lambda \Varid{h}_{3}\to \Varid{id}\hsdot{\circ }{.}\Varid{h}_{3}\hsdot{\circ }{.}\Varid{g})~\Varid{v}_{1}\rfloor\rfloor\rfloor}\\
= & \ensuremath{\lambda \Varid{v}_{1}\;\Varid{v}_{2}\to \lfloor(\Varid{fmap}\;\Varid{g}\hsdot{\circ }{.}\lfloor\Varid{f}~\lfloor(\lambda \Varid{h}_{3}\to \Varid{id}\hsdot{\circ }{.}\Varid{h}_{3}\hsdot{\circ }{.}\Varid{g})~\Varid{v}_{1}\rfloor\rfloor\hsdot{\circ }{.}\Varid{map}\;\Varid{id})~\Varid{v}_{2}\rfloor}\\
= & \ensuremath{\lambda \Varid{v}_{1}\;\Varid{v}_{2}\to \lfloor\Varid{fmap}\;\Varid{g}~\lfloor\lfloor\Varid{f}~\lfloor(\lambda \Varid{h}_{3}\to \Varid{id}\hsdot{\circ }{.}\Varid{h}_{3}\hsdot{\circ }{.}\Varid{g})~\Varid{v}_{1}\rfloor\rfloor~\lfloor\Varid{map}\;\Varid{id}~\Varid{v}_{2}\rfloor\rfloor\rfloor}\\
= & \ensuremath{\lambda \Varid{v}_{1}\;\Varid{v}_{2}\to \Varid{fmap}\;(\lambda \Varid{v}_{3}\to \lfloor\Varid{g}~\Varid{v}_{3}\rfloor)\;\lfloor\Varid{f}\;\lfloor(\lambda \Varid{h}_{3}\to \Varid{id}\hsdot{\circ }{.}\Varid{h}_{3}\hsdot{\circ }{.}\Varid{g})~\Varid{v}_{1}\rfloor~(\Varid{map}\;(\lambda \Varid{v}_{4}\to \lfloor\Varid{id}~\Varid{v}_{4}\rfloor)\;\Varid{v}_{2})\rfloor}\\
= & \ensuremath{\lambda \Varid{v}_{1}\;\Varid{v}_{2}\to \Varid{fmap}\;(\lambda \Varid{v}_{3}\to \Varid{g}\;\Varid{v}_{3})\;(\Varid{f}\;\lfloor(\lambda \Varid{h}_{3}\to \Varid{id}\hsdot{\circ }{.}\Varid{h}_{3}\hsdot{\circ }{.}\Varid{g})~\Varid{v}_{1}\rfloor\;(\Varid{map}\;(\lambda \Varid{v}_{4}\to \Varid{v}_{4})\;\Varid{v}_{2}))}\\
= & \ensuremath{\lambda \Varid{v}_{1}\;\Varid{v}_{2}\to \Varid{fmap}\;(\lambda \Varid{v}_{3}\to \Varid{g}\;\Varid{v}_{3})\;(\Varid{f}\;(\lambda \Varid{v}_{5}\to \lfloor(\Varid{id}\hsdot{\circ }{.}\Varid{v}_{1}\hsdot{\circ }{.}\Varid{g})~\Varid{v}_{5}\rfloor)\;(\Varid{map}\;(\lambda \Varid{v}_{4}\to \Varid{v}_{4})\;\Varid{v}_{2}))}\\
= & \ensuremath{\lambda \Varid{v}_{1}\;\Varid{v}_{2}\to \Varid{fmap}\;(\lambda \Varid{v}_{3}\to \Varid{g}\;\Varid{v}_{3})\;(\Varid{f}\;(\lambda \Varid{v}_{5}\to \lfloor\Varid{id}~\lfloor\Varid{v}_{1}~\lfloor\Varid{g}~\Varid{v}_{5}\rfloor\rfloor\rfloor)\;(\Varid{map}\;(\lambda \Varid{v}_{4}\to \Varid{v}_{4})\;\Varid{v}_{2}))}\\
= & \ensuremath{\lambda \Varid{v}_{1}\;\Varid{v}_{2}\to \Varid{fmap}\;(\lambda \Varid{v}_{3}\to \Varid{g}\;\Varid{v}_{3})\;(\Varid{f}\;(\lambda \Varid{v}_{5}\to \Varid{v}_{1}\;(\Varid{g}\;\Varid{v}_{5}))\;(\Varid{map}\;(\lambda \Varid{v}_{4}\to \Varid{v}_{4})\;\Varid{v}_{2}))}\\
\end{array}
$
\caption{An example calculation, for \ensuremath{\lfloor\Varid{mono}_{\Varid{id},\Varid{g}}((\alpha \to \Conid{Bool})\to [\mskip1.5mu \alpha \mskip1.5mu]\to \Conid{Maybe}\;\alpha )~\Varid{f}\rfloor}}
\label{fig:simpl}
\end{figure*}

Another issue is the unsatisfactory ``simplification'' of \ensuremath{\Varid{fmap}\;\Varid{g}} to \ensuremath{\Varid{fmap}\;(\lambda \Varid{v}_{3}\to \Varid{g}\;\Varid{v}_{3})} in Fig.~\ref{fig:simpl}.
To prevent it, but still keep the general rule \ensuremath{\lfloor\Varid{map}\;\Varid{f}~\Varid{t}\rfloor\mathrel{=}\Varid{map}\;(\lambda \Varid{v}\to \lfloor\Varid{f}~\Varid{v}\rfloor)\;\Varid{t}} in which the recursive descent can be important (say, if \ensuremath{\Varid{f}} is not just a function variable), we add the following simplification rule, which applies right after the one about generalised identities introduced above: \ensuremath{\lfloor\Varid{f}~\Varid{t}\rfloor\mathrel{=}\Varid{f}\;\Varid{t}} whenever \ensuremath{\Varid{f}} can be syntactically generated by the grammar \ensuremath{\Conid{Simple}\mathrel{=}\Varid{v}\mid \Varid{map}\;\Conid{Simple}\mid \Varid{fmap}\;\Conid{Simple}} (where \ensuremath{\Varid{v}} means any variable, including the \ensuremath{\Varid{g}} from \ensuremath{\Varid{mono}_{\Varid{id},\Varid{g}}(\sigma )}, say).
As a result, the calculation in Fig.~\ref{fig:simpl} would now yield the simplified term \ensuremath{\lambda \Varid{v}_{1}\;\Varid{v}_{2}\to \Varid{fmap}\;\Varid{g}\;(\Varid{f}\;(\lambda \Varid{v}_{3}\to \Varid{v}_{1}\;(\Varid{g}\;\Varid{v}_{3}))\;\Varid{v}_{2})}.

In summary, we generate the free theorem for a function \ensuremath{\Varid{f}} of polymorphic type \ensuremath{\sigma } as
\ensuremath{\lfloor\Varid{mono}_{\Varid{id},\Varid{g}}(\sigma )~\Varid{f}\rfloor\mathrel{=}\lfloor\Varid{mono}_{\Varid{g},\Varid{id}}(\sigma )~\Varid{f}\rfloor},
additionally using the simplification rules introduced and elaborated above.
There might still be eta-reducible expressions left in the produced terms.
But those are not necessarily safe to reduce in the presence of \ensuremath{\Varid{seq}}, so are left to a separate post-processing.

\subsection{About what generality is lost}\label{sec:loss}

There is one issue open from (the end of) Section~\ref{sec:problems}, where we promised to explain for what types the presented approach will not give an as general result as the existing generators, since it is impossible to express the general free theorem for those types as an equation without preconditions.
The criterion again depends on the notion of polarity in function types.
We believe it is an exact characterisation, but have no proof to show for it.

Let us annotate all parts of a type expression, not just the type variables, with their polarity.
So, for example, the type \ensuremath{(\alpha \to \alpha )\to \alpha \to \alpha } becomes $(\alpha^+ \ensuremath{\to } \alpha^-)^- \ensuremath{\to } (\alpha^- \ensuremath{\to } \alpha^+)^+$, the type \ensuremath{(\alpha \to \Conid{Bool})\to [\mskip1.5mu \alpha \mskip1.5mu]\to \Conid{Maybe}\;\alpha } becomes $(\alpha^+ \ensuremath{\to } \ensuremath{\Conid{Bool}}^-)^- \ensuremath{\to } ([\alpha^-]^- \ensuremath{\to } (\ensuremath{\Conid{Maybe}}~\alpha^+)^+)^+$, and the type \ensuremath{(\alpha \to \Conid{Bool})\to (\Conid{Bool}\to \alpha )\to [\mskip1.5mu \alpha \mskip1.5mu]\to \alpha } becomes $(\alpha^+ \ensuremath{\to } \ensuremath{\Conid{Bool}}^-)^- \ensuremath{\to } ((\ensuremath{\Conid{Bool}}^+ \ensuremath{\to } \alpha^-)^- \ensuremath{\to } ([\alpha^-]^- \ensuremath{\to } \alpha^+)^+)^+$.
The types for which stating a simple equation is not the most general free theorem are those which contain a negative subexpression in which both a positive and a negative \ensuremath{\alpha } appear.
For example, this is the case for $(\alpha^+ \ensuremath{\to } \alpha^-)^- \ensuremath{\to } (\alpha^- \ensuremath{\to } \alpha^+)^+$, but not for the other two types considered above.

\section{Implementation}\label{sec:impl}

Figs.~\ref{fig:generate}--\ref{fig:main}
in the appendix give the Haskell code for deriving free theorems using the presented approach, minus the code for lexing, parsing, and pretty-printing.
The actual generation work happens
in Fig.~\ref{fig:generate}: \ensuremath{\Varid{mono}} implements \ensuremath{\Varid{mono}_{\Varid{pre},\Varid{post}}(\sigma )}, \ensuremath{\Varid{apply}} implements \ensuremath{\lfloor\Varid{f}~\Varid{t}\rfloor}.
An
eta-reducer
is implemented in Fig.~\ref{fig:syntax}.
To encode lambda-terms (and substitution), we use higher-order abstract syntax, and normalisation by evaluation principles come into play as well.

The generator code is available online \citep{package-ft-generator}, and can be installed using standard Haskell package management tools.
Alternatively, the appendix contains some more step-by-step installation instructions.
To see the generator in action, see Fig.~\ref{fig:session} in the appendix.

From Section~\ref{sec:simpl} we still owe an argument that \ensuremath{\lfloor\cdot\rfloor} is exhaustive, i.e., that we cannot accidentally skip transforming (and thus, simplifying) some part of the term produced by \ensuremath{\Varid{mono}}.
So, consider the following.
In the implementation, \ensuremath{\lfloor\cdot\rfloor} does not take one argument term, but instead two, and they are differently typed.
Specifically, there is one syntax type for terms generated by \ensuremath{\Varid{mono}} and another syntax type for final output terms.
The type of \ensuremath{\lfloor\cdot\rfloor} is such that it always takes an \ensuremath{\Varid{f}} in the former syntax and a \ensuremath{\Varid{t}} in the latter syntax, in the form \ensuremath{\lfloor\Varid{f}~\Varid{t}\rfloor}.
The type of \ensuremath{\Varid{mono}} guarantees that it indeed generates \ensuremath{\Varid{f}} in the former syntax type (in essence, hence, this syntax type characterises a subclass of lambda-terms in which all terms possibly generated by \ensuremath{\Varid{mono}} live).
Since it is easy to see by inspection of the implementation that \ensuremath{\lfloor\cdot\rfloor} does an exhaustive case distinction on all possible forms of its first argument (it handles all constructor cases of its syntax type), and since \ensuremath{\lfloor\cdot\rfloor}'s output type is the one of final output terms, not of \ensuremath{\Varid{mono}}-generated terms, we know that no parts of the \ensuremath{\Varid{mono}_{\Varid{pre},\Varid{post}}(\sigma )}-term survive untouched.
In particular, the syntax and function types in the implementation also tell us that the catch-all case \ensuremath{\lfloor\Varid{f}~\Varid{t}\rfloor\mathrel{=}\Varid{f}\;\Varid{t}} will not have to deal with any \ensuremath{\Varid{f}} that still contains \ensuremath{\Varid{mono}}-material.
Instead, we know that if the catch-all case is reached, \ensuremath{\Varid{f}} is a variable or an already simplified final output term (which could have come into place via the substitution in an earlier recursive call \ensuremath{\lfloor\Varid{body}[t/h]~\Varid{v}\rfloor}).

\subsubsection*{Acknowledgements.}

Many of the ideas here, most notably the conjuring lemma (including that name), but also the criterion proposed in Section~\ref{sec:loss}, originated during past collaboration with Stefan Mehner.

\bibliographystyle{splncs04}
\bibliography{references}

\newpage

\appendix
\section{Implementation}

See Section~\ref{sec:impl} for some explanation of the implementation.
To run the generator locally, install the Haskell platform, download \url{https://hackage.haskell.org/package/ft-generator-1.0/ft-generator-1.0.tar.gz}, unpack it and navigate into the created directory, call \text{\ttfamily cabal~install}, and then the executable (the \text{\ttfamily cabal} command should tell where it has put it).

\begin{figure*}
\begingroup\par\vskip\abovedisplayskip\noindent\advance\leftskip\mathindent\(
\begin{pboxed}\SaveRestoreHook
\column{B}{@{}>{\hspre}l<{\hspost}@{}}%
\column{15}{@{}>{\hspre}l<{\hspost}@{}}%
\column{18}{@{}>{\hspre}l<{\hspost}@{}}%
\column{20}{@{}>{\hspre}l<{\hspost}@{}}%
\column{44}{@{}>{\hspre}l<{\hspost}@{}}%
\column{48}{@{}>{\hspre}l<{\hspost}@{}}%
\column{50}{@{}>{\hspre}l<{\hspost}@{}}%
\column{E}{@{}>{\hspre}l<{\hspost}@{}}%
\>[B]{}\mathbf{module}\;\Conid{Generate}\;(\Varid{mono},\Varid{apply})\;\mathbf{where}{}\<[E]%
\\[0.7ex]
\>[B]{}\mathbf{import}\;\Conid{Syntax}\;(\Conid{Type}\;(\mathinner{\ldotp\ldotp}),\Conid{Func}\;(\mathinner{\ldotp\ldotp}),\Conid{Term}\;(\mathinner{\ldotp\ldotp})){}\<[E]%
\\[0.7ex]
\>[B]{}\Varid{mono}\mathbin{::}\Conid{Type}\to \Conid{Func}\;\Varid{v}\to \Conid{Func}\;\Varid{v}\to \Conid{Func}\;\Varid{v}{}\<[E]%
\\
\>[B]{}\Varid{mono}\;\Conid{Alpha}\;{}\<[18]%
\>[18]{}\Varid{pre}\;\Varid{post}\mathrel{=}\Varid{post}{}\<[E]%
\\
\>[B]{}\Varid{mono}\;\Conid{Bool}\;{}\<[18]%
\>[18]{}\Varid{pre}\;\Varid{post}\mathrel{=}\Conid{Id}{}\<[E]%
\\
\>[B]{}\Varid{mono}\;\Conid{Int}\;{}\<[18]%
\>[18]{}\Varid{pre}\;\Varid{post}\mathrel{=}\Conid{Id}{}\<[E]%
\\
\>[B]{}\Varid{mono}\;(\Conid{List}\;\Varid{t})\;{}\<[18]%
\>[18]{}\Varid{pre}\;\Varid{post}\mathrel{=}\Conid{Map}\;\text{\ttfamily \char34 map\char34}\;(\Varid{mono}\;\Varid{t}\;\Varid{pre}\;\Varid{post}){}\<[E]%
\\
\>[B]{}\Varid{mono}\;(\Conid{Maybe}\;\Varid{t})\;{}\<[18]%
\>[18]{}\Varid{pre}\;\Varid{post}\mathrel{=}\Conid{Map}\;\text{\ttfamily \char34 fmap\char34}\;(\Varid{mono}\;\Varid{t}\;\Varid{pre}\;\Varid{post}){}\<[E]%
\\
\>[B]{}\Varid{mono}\;(\Varid{s}\mathbin{`\Conid{To}`}\Varid{t})\;{}\<[18]%
\>[18]{}\Varid{pre}\;\Varid{post}\mathrel{=}\Conid{Lambda}\;(\lambda \Varid{h}\to {}\<[44]%
\>[44]{}\Varid{mono}\;\Varid{t}\;\Varid{pre}\;\Varid{post}{}\<[E]%
\\
\>[44]{}\mathbin{`\Conid{Comp}`}{}\<[E]%
\\
\>[44]{}\Conid{Embed}\;\Varid{h}{}\<[E]%
\\
\>[44]{}\mathbin{`\Conid{Comp}`}{}\<[E]%
\\
\>[44]{}\Varid{mono}\;\Varid{s}\;\Varid{post}\;\Varid{pre}){}\<[E]%
\\[0.7ex]
\>[B]{}\Varid{apply}\mathbin{::}\Conid{Func}\;(\Conid{Term}\;\Varid{v})\to \Conid{Term}\;\Varid{v}\to \Conid{Term}\;\Varid{v}{}\<[E]%
\\
\>[B]{}\Varid{f}{}\<[15]%
\>[15]{}\mathbin{`\Varid{apply}`}\Varid{t}\mid \Varid{isId}\;\Varid{f}{}\<[50]%
\>[50]{}\mathrel{=}\Varid{t}{}\<[E]%
\\
\>[B]{}\Varid{f}{}\<[15]%
\>[15]{}\mathbin{`\Varid{apply}`}\Varid{t}\mid \Conid{Just}\;\Varid{f'}\leftarrow \Varid{isSimple}\;\Varid{f}{}\<[50]%
\>[50]{}\mathrel{=}\Varid{f'}\mathbin{`\Conid{Apply}`}\Varid{t}{}\<[E]%
\\
\>[B]{}\Conid{Map}\;\Varid{name}\;\Varid{f}{}\<[15]%
\>[15]{}\mathbin{`\Varid{apply}`}\Varid{t}\mathrel{=}\Conid{Const}\;\Varid{name}\mathbin{`\Conid{Apply}`}\Conid{Lambda'}\;(\lambda \Varid{v}\to \Varid{f}\mathbin{`\Varid{apply}`}(\Conid{Var}\;\Varid{v}))\mathbin{`\Conid{Apply}`}\Varid{t}{}\<[E]%
\\
\>[B]{}\Conid{Lambda}\;\Varid{f}{}\<[15]%
\>[15]{}\mathbin{`\Varid{apply}`}\Varid{t}\mathrel{=}\Conid{Lambda'}\;(\lambda \Varid{v}\to \Varid{f}\;\Varid{t}\mathbin{`\Varid{apply}`}(\Conid{Var}\;\Varid{v})){}\<[E]%
\\
\>[B]{}(\Varid{f}\mathbin{`\Conid{Comp}`}\Varid{g}){}\<[15]%
\>[15]{}\mathbin{`\Varid{apply}`}\Varid{t}\mathrel{=}\Varid{f}\mathbin{`\Varid{apply}`}(\Varid{g}\mathbin{`\Varid{apply}`}\Varid{t}){}\<[E]%
\\
\>[B]{}\Conid{Embed}\;\Varid{f}{}\<[15]%
\>[15]{}\mathbin{`\Varid{apply}`}\Varid{t}\mathrel{=}\Varid{f}\mathbin{`\Conid{Apply}`}\Varid{t}{}\<[E]%
\\[0.7ex]
\>[B]{}\Varid{isId}\mathbin{::}\Conid{Func}\;\Varid{v}\to \Conid{Bool}{}\<[E]%
\\
\>[B]{}\Varid{isId}\;\Conid{Id}{}\<[20]%
\>[20]{}\mathrel{=}\Conid{True}{}\<[E]%
\\
\>[B]{}\Varid{isId}\;(\Conid{Map}\;\anonymous \;\Varid{f}){}\<[20]%
\>[20]{}\mathrel{=}\Varid{isId}\;\Varid{f}{}\<[E]%
\\
\>[B]{}\Varid{isId}\;(\Varid{f}\mathbin{`\Conid{Comp}`}\Varid{g}){}\<[20]%
\>[20]{}\mathrel{=}\Varid{isId}\;\Varid{f}\mathop{\,\&\&\,}\Varid{isId}\;\Varid{g}{}\<[E]%
\\
\>[B]{}\Varid{isId}\;\anonymous {}\<[20]%
\>[20]{}\mathrel{=}\Conid{False}{}\<[E]%
\\[0.7ex]
\>[B]{}\Varid{isSimple}\mathbin{::}\Conid{Func}\;(\Conid{Term}\;\Varid{v})\to \Conid{Maybe}\;(\Conid{Term}\;\Varid{v}){}\<[E]%
\\
\>[B]{}\Varid{isSimple}\;(\Conid{Embed}\;\Varid{f}\mathord{@}(\Conid{Var}\;\anonymous )){}\<[48]%
\>[48]{}\mathrel{=}\Conid{Just}\;\Varid{f}{}\<[E]%
\\
\>[B]{}\Varid{isSimple}\;(\Conid{Map}\;\Varid{name}\;\Varid{f})\mid \Conid{Just}\;\Varid{f'}\leftarrow \Varid{isSimple}\;\Varid{f}{}\<[48]%
\>[48]{}\mathrel{=}\Conid{Just}\;(\Conid{Const}\;\Varid{name}\mathbin{`\Conid{Apply}`}\Varid{f'}){}\<[E]%
\\
\>[B]{}\Varid{isSimple}\;\anonymous {}\<[48]%
\>[48]{}\mathrel{=}\Conid{Nothing}{}\<[E]%
\ColumnHook
\end{pboxed}
\)\par\vskip\belowdisplayskip\noindent\endgroup\resethooks
\caption{module \ensuremath{\Conid{Generate}}, generation and simplification of free theorems}\label{fig:generate}
\end{figure*}

\begin{figure*}
\begingroup\par\vskip\abovedisplayskip\noindent\advance\leftskip\mathindent\(
\begin{pboxed}\SaveRestoreHook
\column{B}{@{}>{\hspre}l<{\hspost}@{}}%
\column{3}{@{}>{\hspre}l<{\hspost}@{}}%
\column{10}{@{}>{\hspre}l<{\hspost}@{}}%
\column{14}{@{}>{\hspre}c<{\hspost}@{}}%
\column{14E}{@{}l@{}}%
\column{17}{@{}>{\hspre}l<{\hspost}@{}}%
\column{21}{@{}>{\hspre}l<{\hspost}@{}}%
\column{28}{@{}>{\hspre}l<{\hspost}@{}}%
\column{34}{@{}>{\hspre}l<{\hspost}@{}}%
\column{36}{@{}>{\hspre}l<{\hspost}@{}}%
\column{56}{@{}>{\hspre}l<{\hspost}@{}}%
\column{E}{@{}>{\hspre}l<{\hspost}@{}}%
\>[B]{}\mathbf{module}\;\Conid{Syntax}\;(\Conid{Type}\;(\mathinner{\ldotp\ldotp}),\Conid{Func}\;(\mathinner{\ldotp\ldotp}),\Conid{Term}\;(\mathinner{\ldotp\ldotp}),\Varid{etaReduce})\;\mathbf{where}{}\<[E]%
\\[0.7ex]
\>[B]{}\mathbf{import}\;\Conid{\Conid{Control}.\Conid{Monad}.State}{}\<[E]%
\\[0.7ex]
\>[B]{}\mathbf{data}\;\Conid{Type}\mathrel{=}\Conid{Alpha}\mid \Conid{Bool}\mid \Conid{Int}\mid \Conid{List}\;\Conid{Type}\mid \Conid{Maybe}\;\Conid{Type}\mid \Conid{Type}\mathbin{`\Conid{To}`}\Conid{Type}{}\<[E]%
\\[0.7ex]
\>[B]{}\mathbf{data}\;\Conid{Func}\;\Varid{v}{}\<[14]%
\>[14]{}\mathrel{=}{}\<[14E]%
\>[17]{}\Conid{Id}\mid \Conid{Map}\;\Conid{String}\;(\Conid{Func}\;\Varid{v})\mid \Conid{Lambda}\;(\Varid{v}\to \Conid{Func}\;\Varid{v})\mid \Conid{Func}\;\Varid{v}\mathbin{`\Conid{Comp}`}\Conid{Func}\;\Varid{v}{}\<[E]%
\\
\>[14]{}\mid {}\<[14E]%
\>[17]{}\Conid{Embed}\;\Varid{v}{}\<[E]%
\\[0.7ex]
\>[B]{}\mathbf{data}\;\Conid{Term}\;\Varid{v}\mathrel{=}\Conid{Const}\;\Conid{String}\mid \Conid{Var}\;\Varid{v}\mid \Conid{Term}\;\Varid{v}\mathbin{`\Conid{Apply}`}\Conid{Term}\;\Varid{v}\mid \Conid{Lambda'}\;(\Varid{v}\to \Conid{Term}\;\Varid{v}){}\<[E]%
\\[0.7ex]
\>[B]{}\Varid{etaReduce}\mathbin{::}\Conid{Term}\;\Conid{String}\to \Conid{Term}\;\Conid{String}{}\<[E]%
\\
\>[B]{}\Varid{etaReduce}\;\Varid{t}\mathrel{=}\Varid{evalState}\;(\Varid{go}\;\Varid{t})\;[\mskip1.5mu \text{\ttfamily '*'}\mathbin{:}\Varid{show}\;\Varid{n}\mid \Varid{n}\leftarrow [\mskip1.5mu \mathrm{1}\mathinner{\ldotp\ldotp}\mskip1.5mu]\mskip1.5mu]{}\<[E]%
\\
\>[B]{}\hsindent{3}{}\<[3]%
\>[3]{}\mathbf{where}\;{}\<[10]%
\>[10]{}\Varid{go}\;\Varid{t}\mathord{@}(\Conid{Const}\;\anonymous ){}\<[28]%
\>[28]{}\mathrel{=}\Varid{return}\;\Varid{t}{}\<[E]%
\\
\>[10]{}\Varid{go}\;\Varid{t}\mathord{@}(\Conid{Var}\;\anonymous ){}\<[28]%
\>[28]{}\mathrel{=}\Varid{return}\;\Varid{t}{}\<[E]%
\\
\>[10]{}\Varid{go}\;(\Varid{f}\mathbin{`\Conid{Apply}`}\Varid{t}){}\<[28]%
\>[28]{}\mathrel{=}\Varid{liftM2}\;\Conid{Apply}\;(\Varid{go}\;\Varid{f})\;(\Varid{go}\;\Varid{t}){}\<[E]%
\\
\>[10]{}\Varid{go}\;(\Conid{Lambda'}\;\Varid{f}){}\<[28]%
\>[28]{}\mathrel{=}\mathbf{do}\;{}\<[34]%
\>[34]{}\Varid{v}\mathbin{:}\Varid{vs}\leftarrow \Varid{get}{}\<[E]%
\\
\>[34]{}\Varid{put}\;\Varid{vs}{}\<[E]%
\\
\>[34]{}\Varid{body}\leftarrow \Varid{go}\;(\Varid{f}\;\Varid{v}){}\<[E]%
\\
\>[34]{}\mathbf{case}\;\Varid{body}\;\mathbf{of}{}\<[E]%
\\
\>[34]{}\hsindent{2}{}\<[36]%
\>[36]{}\Varid{t}\mathbin{`\Conid{Apply}`}\Conid{Var}\;\Varid{v'}\mid {}\<[56]%
\>[56]{}\Varid{not}\;(\Varid{freeVar}\;\Varid{v}\;\Varid{t})\mathop{\,\&\&\,}\Varid{v'}\mathrel{==}\Varid{v}{}\<[E]%
\\
\>[56]{}\to \Varid{return}\;\Varid{t}{}\<[E]%
\\
\>[34]{}\hsindent{2}{}\<[36]%
\>[36]{}\anonymous \to \Varid{return}\;(\Conid{Lambda'}\;(\lambda \Varid{v}\to \Varid{evalState}\;(\Varid{go}\;(\Varid{f}\;\Varid{v}))\;\Varid{vs})){}\<[E]%
\\[0.7ex]
\>[10]{}\Varid{freeVar}\;\anonymous \;{}\<[21]%
\>[21]{}(\Conid{Const}\;\anonymous ){}\<[36]%
\>[36]{}\mathrel{=}\Conid{False}{}\<[E]%
\\
\>[10]{}\Varid{freeVar}\;\Varid{v}\;{}\<[21]%
\>[21]{}(\Conid{Var}\;\Varid{v'}){}\<[36]%
\>[36]{}\mathrel{=}\Varid{v'}\mathrel{==}\Varid{v}{}\<[E]%
\\
\>[10]{}\Varid{freeVar}\;\Varid{v}\;{}\<[21]%
\>[21]{}(\Varid{f}\mathbin{`\Conid{Apply}`}\Varid{t}){}\<[36]%
\>[36]{}\mathrel{=}\Varid{freeVar}\;\Varid{v}\;\Varid{f}\mathop{\,||\,}\Varid{freeVar}\;\Varid{v}\;\Varid{t}{}\<[E]%
\\
\>[10]{}\Varid{freeVar}\;\Varid{v}\;{}\<[21]%
\>[21]{}(\Conid{Lambda'}\;\Varid{f}){}\<[36]%
\>[36]{}\mathrel{=}\Varid{freeVar}\;\Varid{v}\;(\Varid{f}\;\text{\ttfamily \char34 \char34}){}\<[E]%
\ColumnHook
\end{pboxed}
\)\par\vskip\belowdisplayskip\noindent\endgroup\resethooks
\caption{module \ensuremath{\Conid{Syntax}}, datatypes for types and different forms of (higher-order abstract syntax) terms, and eta-reduction}\label{fig:syntax}
\end{figure*}

\begin{figure*}
\begingroup\par\vskip\abovedisplayskip\noindent\advance\leftskip\mathindent\(
\begin{pboxed}\SaveRestoreHook
\column{B}{@{}>{\hspre}l<{\hspost}@{}}%
\column{3}{@{}>{\hspre}l<{\hspost}@{}}%
\column{8}{@{}>{\hspre}l<{\hspost}@{}}%
\column{E}{@{}>{\hspre}l<{\hspost}@{}}%
\>[B]{}\mathbf{module}\;\Conid{Main}\;(\Varid{main})\;\mathbf{where}{}\<[E]%
\\[0.7ex]
\>[B]{}\mathbf{import}\;\Conid{Syntax}\;(\Conid{Type}\;(\mathinner{\ldotp\ldotp}),\Conid{Func}\;(\mathinner{\ldotp\ldotp}),\Conid{Term}\;(\mathinner{\ldotp\ldotp}),\Varid{etaReduce}){}\<[E]%
\\
\>[B]{}\mathbf{import}\;\Conid{Parser}\;(\Varid{parse}){}\<[E]%
\\
\>[B]{}\mathbf{import}\;\Conid{Generate}\;(\Varid{mono},\Varid{apply}){}\<[E]%
\\
\>[B]{}\mathbf{import}\;\Conid{Show}\;(){}\<[E]%
\\[0.7ex]
\>[B]{}\mathbf{import}\;\Conid{\Conid{System}.IO}{}\<[E]%
\\[0.7ex]
\>[B]{}\Varid{sigma}\mathbin{::}\Conid{Type}{}\<[E]%
\\
\>[B]{}\Varid{sigma}\mathrel{=}(\Conid{Alpha}\mathbin{`\Conid{To}`}\Conid{Bool})\mathbin{`\Conid{To}`}((\Conid{Bool}\mathbin{`\Conid{To}`}\Conid{Alpha})\mathbin{`\Conid{To}`}(\Conid{List}\;\Conid{Alpha}\mathbin{`\Conid{To}`}\Conid{Alpha})){}\<[E]%
\\[0.7ex]
\>[B]{}\Varid{main}\mathbin{::}\Conid{IO}\;(){}\<[E]%
\\
\>[B]{}\Varid{main}\mathrel{=}\mathbf{do}{}\<[E]%
\\
\>[B]{}\hsindent{3}{}\<[3]%
\>[3]{}\Varid{hSetBuffering}\;\Varid{stdout}\;\Conid{NoBuffering}{}\<[E]%
\\
\>[B]{}\hsindent{3}{}\<[3]%
\>[3]{}\Varid{putStr}\mathbin{\$}\text{\ttfamily \char34 function~type~(or~Enter~for~default):~\char34}{}\<[E]%
\\
\>[B]{}\hsindent{3}{}\<[3]%
\>[3]{}\Varid{sigma}\leftarrow \Varid{getLine}\bind \lambda \Varid{s}\to \Varid{return}\mathbin{\$}\mathbf{if}\;\Varid{s}\mathrel{==}\text{\ttfamily \char34 \char34}\;\mathbf{then}\;\Varid{sigma}\;\mathbf{else}\;\Varid{parse}\;\Varid{s}{}\<[E]%
\\
\>[B]{}\hsindent{3}{}\<[3]%
\>[3]{}\Varid{putStrLn}\;\text{\ttfamily \char34 \char34}{}\<[E]%
\\
\>[B]{}\hsindent{3}{}\<[3]%
\>[3]{}\Varid{putStrLn}\mathbin{\$}\text{\ttfamily \char34 f~::~\char34}\plus \Varid{show}\;\Varid{sigma}{}\<[E]%
\\
\>[B]{}\hsindent{3}{}\<[3]%
\>[3]{}\Varid{putStrLn}\mathbin{\$}\Varid{replicate}\;\mathrm{66}\;\text{\ttfamily '-'}{}\<[E]%
\\
\>[B]{}\hsindent{3}{}\<[3]%
\>[3]{}\Varid{putStrLn}\mathbin{\$}\text{\ttfamily \char34 e~=~\char34}\plus \Varid{show}\;(\Varid{mono}\;\Varid{sigma}\;(\Conid{Embed}\;\text{\ttfamily \char34 pre\char34})\;(\Conid{Embed}\;\text{\ttfamily \char34 post\char34}))\plus \text{\ttfamily \char34 ~f\char34}{}\<[E]%
\\
\>[B]{}\hsindent{3}{}\<[3]%
\>[3]{}\Varid{putStrLn}\mathbin{\$}\Varid{replicate}\;\mathrm{66}\;\text{\ttfamily '-'}{}\<[E]%
\\
\>[B]{}\hsindent{3}{}\<[3]%
\>[3]{}\mathbf{let}\;{}\<[8]%
\>[8]{}\Varid{lhs}\mathrel{=}\Varid{mono}\;\Varid{sigma}\;\Conid{Id}\;(\Conid{Embed}\;(\Conid{Var}\;\text{\ttfamily \char34 g\char34}))\mathbin{`\Varid{apply}`}\Conid{Const}\;\text{\ttfamily \char34 f\char34}{}\<[E]%
\\
\>[8]{}\Varid{rhs}\mathrel{=}\Varid{mono}\;\Varid{sigma}\;(\Conid{Embed}\;(\Conid{Var}\;\text{\ttfamily \char34 g\char34}))\;\Conid{Id}\mathbin{`\Varid{apply}`}\Conid{Const}\;\text{\ttfamily \char34 f\char34}{}\<[E]%
\\
\>[B]{}\hsindent{3}{}\<[3]%
\>[3]{}\Varid{putStrLn}\mathbin{\$}\text{\ttfamily \char34 free~theorem:\char34}{}\<[E]%
\\
\>[B]{}\hsindent{3}{}\<[3]%
\>[3]{}\Varid{putStrLn}\mathbin{\$}\text{\ttfamily \char34 ~\char34}\plus \Varid{show}\;\Varid{lhs}{}\<[E]%
\\
\>[B]{}\hsindent{3}{}\<[3]%
\>[3]{}\Varid{putStrLn}\;\text{\ttfamily \char34 ~~=\char34}{}\<[E]%
\\
\>[B]{}\hsindent{3}{}\<[3]%
\>[3]{}\Varid{putStrLn}\mathbin{\$}\text{\ttfamily \char34 ~\char34}\plus \Varid{show}\;\Varid{rhs}{}\<[E]%
\\
\>[B]{}\hsindent{3}{}\<[3]%
\>[3]{}\Varid{putStrLn}\mathbin{\$}\Varid{replicate}\;\mathrm{66}\;\text{\ttfamily '-'}{}\<[E]%
\\
\>[B]{}\hsindent{3}{}\<[3]%
\>[3]{}\Varid{putStrLn}\mathbin{\$}\text{\ttfamily \char34 free~theorem,~eta-reduced:\char34}{}\<[E]%
\\
\>[B]{}\hsindent{3}{}\<[3]%
\>[3]{}\Varid{putStrLn}\mathbin{\$}\text{\ttfamily \char34 ~\char34}\plus \Varid{show}\;(\Varid{etaReduce}\;\Varid{lhs}){}\<[E]%
\\
\>[B]{}\hsindent{3}{}\<[3]%
\>[3]{}\Varid{putStrLn}\;\text{\ttfamily \char34 ~~=\char34}{}\<[E]%
\\
\>[B]{}\hsindent{3}{}\<[3]%
\>[3]{}\Varid{putStrLn}\mathbin{\$}\text{\ttfamily \char34 ~\char34}\plus \Varid{show}\;(\Varid{etaReduce}\;\Varid{rhs}){}\<[E]%
\\
\>[B]{}\hsindent{3}{}\<[3]%
\>[3]{}\Varid{putStrLn}\;\text{\ttfamily \char34 \char34}{}\<[E]%
\ColumnHook
\end{pboxed}
\)\par\vskip\belowdisplayskip\noindent\endgroup\resethooks
\caption{module \ensuremath{\Conid{Main}}, putting the generator together with input and output}\label{fig:main}
\end{figure*}

\begin{figure*}
\begin{tabbing}\ttfamily
~\char42{}Main\char62{}~main\\
\ttfamily ~function~type~\char40{}or~Enter~for~default\char41{}\char58{}~\\
\ttfamily ~\\
\ttfamily ~f~\char58{}\char58{}~\char40{}alpha~\char45{}\char62{}~Bool\char41{}~\char45{}\char62{}~\char40{}Bool~\char45{}\char62{}~alpha\char41{}~\char45{}\char62{}~\char91{}alpha\char93{}~\char45{}\char62{}~alpha\\
\ttfamily ~\char45{}\char45{}\char45{}\char45{}\char45{}\char45{}\char45{}\char45{}\char45{}\char45{}\char45{}\char45{}\char45{}\char45{}\char45{}\char45{}\char45{}\char45{}\char45{}\char45{}\char45{}\char45{}\char45{}\char45{}\char45{}\char45{}\char45{}\char45{}\char45{}\char45{}\char45{}\char45{}\char45{}\char45{}\char45{}\char45{}\char45{}\char45{}\char45{}\char45{}\char45{}\char45{}\char45{}\char45{}\char45{}\char45{}\char45{}\char45{}\char45{}\char45{}\char45{}\char45{}\char45{}\char45{}\char45{}\char45{}\char45{}\char45{}\char45{}\char45{}\char45{}\char45{}\char45{}\char45{}\char45{}\char45{}\\
\ttfamily ~e~\char61{}~\char40{}\char92{}h1~\char45{}\char62{}~\char40{}\char92{}h2~\char45{}\char62{}~\char40{}\char92{}h3~\char45{}\char62{}~post~\char46{}~h3~\char46{}~map~pre\char41{}\\
\ttfamily ~~~~~~~~~~~~~~~~~~~~~\char46{}~h2~\char46{}~\char40{}\char92{}h4~\char45{}\char62{}~pre~\char46{}~h4~\char46{}~id\char41{}\char41{}\\
\ttfamily ~~~~~~~~~~~~~\char46{}~h1~\char46{}~\char40{}\char92{}h5~\char45{}\char62{}~id~\char46{}~h5~\char46{}~post\char41{}\char41{}~f\\
\ttfamily ~\char45{}\char45{}\char45{}\char45{}\char45{}\char45{}\char45{}\char45{}\char45{}\char45{}\char45{}\char45{}\char45{}\char45{}\char45{}\char45{}\char45{}\char45{}\char45{}\char45{}\char45{}\char45{}\char45{}\char45{}\char45{}\char45{}\char45{}\char45{}\char45{}\char45{}\char45{}\char45{}\char45{}\char45{}\char45{}\char45{}\char45{}\char45{}\char45{}\char45{}\char45{}\char45{}\char45{}\char45{}\char45{}\char45{}\char45{}\char45{}\char45{}\char45{}\char45{}\char45{}\char45{}\char45{}\char45{}\char45{}\char45{}\char45{}\char45{}\char45{}\char45{}\char45{}\char45{}\char45{}\char45{}\char45{}\\
\ttfamily ~free~theorem\char58{}\\
\ttfamily ~~\char92{}x1~x2~x3~\char45{}\char62{}~g~\char40{}f~\char40{}\char92{}x4~\char45{}\char62{}~x1~\char40{}g~x4\char41{}\char41{}~\char40{}\char92{}x5~\char45{}\char62{}~x2~x5\char41{}~x3\char41{}\\
\ttfamily ~~~\char61{}\\
\ttfamily ~~\char92{}x1~x2~x3~\char45{}\char62{}~f~\char40{}\char92{}x4~\char45{}\char62{}~x1~x4\char41{}~\char40{}\char92{}x5~\char45{}\char62{}~g~\char40{}x2~x5\char41{}\char41{}~\char40{}map~g~x3\char41{}\\
\ttfamily ~\char45{}\char45{}\char45{}\char45{}\char45{}\char45{}\char45{}\char45{}\char45{}\char45{}\char45{}\char45{}\char45{}\char45{}\char45{}\char45{}\char45{}\char45{}\char45{}\char45{}\char45{}\char45{}\char45{}\char45{}\char45{}\char45{}\char45{}\char45{}\char45{}\char45{}\char45{}\char45{}\char45{}\char45{}\char45{}\char45{}\char45{}\char45{}\char45{}\char45{}\char45{}\char45{}\char45{}\char45{}\char45{}\char45{}\char45{}\char45{}\char45{}\char45{}\char45{}\char45{}\char45{}\char45{}\char45{}\char45{}\char45{}\char45{}\char45{}\char45{}\char45{}\char45{}\char45{}\char45{}\char45{}\char45{}\\
\ttfamily ~free~theorem\char44{}~eta\char45{}reduced\char58{}\\
\ttfamily ~~\char92{}x1~x2~x3~\char45{}\char62{}~g~\char40{}f~\char40{}\char92{}x4~\char45{}\char62{}~x1~\char40{}g~x4\char41{}\char41{}~x2~x3\char41{}\\
\ttfamily ~~~\char61{}\\
\ttfamily ~~\char92{}x1~x2~x3~\char45{}\char62{}~f~x1~\char40{}\char92{}x4~\char45{}\char62{}~g~\char40{}x2~x4\char41{}\char41{}~\char40{}map~g~x3\char41{}\\
\ttfamily ~\\
\ttfamily ~\char42{}Main\char62{}~main\\
\ttfamily ~function~type~\char40{}or~Enter~for~default\char41{}\char58{}~\char40{}a~\char45{}\char62{}~a~\char45{}\char62{}~Bool\char41{}~\char45{}\char62{}~\char91{}a\char93{}~\char45{}\char62{}~\char91{}a\char93{}\\
\ttfamily ~\\
\ttfamily ~f~\char58{}\char58{}~\char40{}alpha~\char45{}\char62{}~alpha~\char45{}\char62{}~Bool\char41{}~\char45{}\char62{}~\char91{}alpha\char93{}~\char45{}\char62{}~\char91{}alpha\char93{}\\
\ttfamily ~\char45{}\char45{}\char45{}\char45{}\char45{}\char45{}\char45{}\char45{}\char45{}\char45{}\char45{}\char45{}\char45{}\char45{}\char45{}\char45{}\char45{}\char45{}\char45{}\char45{}\char45{}\char45{}\char45{}\char45{}\char45{}\char45{}\char45{}\char45{}\char45{}\char45{}\char45{}\char45{}\char45{}\char45{}\char45{}\char45{}\char45{}\char45{}\char45{}\char45{}\char45{}\char45{}\char45{}\char45{}\char45{}\char45{}\char45{}\char45{}\char45{}\char45{}\char45{}\char45{}\char45{}\char45{}\char45{}\char45{}\char45{}\char45{}\char45{}\char45{}\char45{}\char45{}\char45{}\char45{}\char45{}\char45{}\\
\ttfamily ~e~\char61{}~\char40{}\char92{}h1~\char45{}\char62{}~\char40{}\char92{}h2~\char45{}\char62{}~map~post~\char46{}~h2~\char46{}~map~pre\char41{}\\
\ttfamily ~~~~~~~~~~~~~\char46{}~h1~\char46{}~\char40{}\char92{}h3~\char45{}\char62{}~\char40{}\char92{}h4~\char45{}\char62{}~id~\char46{}~h4~\char46{}~post\char41{}~\char46{}~h3~\char46{}~post\char41{}\char41{}~f\\
\ttfamily ~\char45{}\char45{}\char45{}\char45{}\char45{}\char45{}\char45{}\char45{}\char45{}\char45{}\char45{}\char45{}\char45{}\char45{}\char45{}\char45{}\char45{}\char45{}\char45{}\char45{}\char45{}\char45{}\char45{}\char45{}\char45{}\char45{}\char45{}\char45{}\char45{}\char45{}\char45{}\char45{}\char45{}\char45{}\char45{}\char45{}\char45{}\char45{}\char45{}\char45{}\char45{}\char45{}\char45{}\char45{}\char45{}\char45{}\char45{}\char45{}\char45{}\char45{}\char45{}\char45{}\char45{}\char45{}\char45{}\char45{}\char45{}\char45{}\char45{}\char45{}\char45{}\char45{}\char45{}\char45{}\char45{}\char45{}\\
\ttfamily ~free~theorem\char58{}\\
\ttfamily ~~\char92{}x1~x2~\char45{}\char62{}~map~g~\char40{}f~\char40{}\char92{}x3~x4~\char45{}\char62{}~x1~\char40{}g~x3\char41{}~\char40{}g~x4\char41{}\char41{}~x2\char41{}\\
\ttfamily ~~~\char61{}\\
\ttfamily ~~\char92{}x1~x2~\char45{}\char62{}~f~\char40{}\char92{}x3~x4~\char45{}\char62{}~x1~x3~x4\char41{}~\char40{}map~g~x2\char41{}\\
\ttfamily ~\char45{}\char45{}\char45{}\char45{}\char45{}\char45{}\char45{}\char45{}\char45{}\char45{}\char45{}\char45{}\char45{}\char45{}\char45{}\char45{}\char45{}\char45{}\char45{}\char45{}\char45{}\char45{}\char45{}\char45{}\char45{}\char45{}\char45{}\char45{}\char45{}\char45{}\char45{}\char45{}\char45{}\char45{}\char45{}\char45{}\char45{}\char45{}\char45{}\char45{}\char45{}\char45{}\char45{}\char45{}\char45{}\char45{}\char45{}\char45{}\char45{}\char45{}\char45{}\char45{}\char45{}\char45{}\char45{}\char45{}\char45{}\char45{}\char45{}\char45{}\char45{}\char45{}\char45{}\char45{}\char45{}\char45{}\\
\ttfamily ~free~theorem\char44{}~eta\char45{}reduced\char58{}\\
\ttfamily ~~\char92{}x1~x2~\char45{}\char62{}~map~g~\char40{}f~\char40{}\char92{}x3~x4~\char45{}\char62{}~x1~\char40{}g~x3\char41{}~\char40{}g~x4\char41{}\char41{}~x2\char41{}\\
\ttfamily ~~~\char61{}\\
\ttfamily ~~\char92{}x1~x2~\char45{}\char62{}~f~x1~\char40{}map~g~x2\char41{}
\end{tabbing}
  \caption{An example session}
  \label{fig:session}
\end{figure*}

\end{document}